\newcommand{\qedsymbol}{\hfill \rule{0.7em}{0.7em}}
\newtheorem{concentration conditions}{Concentration Conditions}
\newtheorem{fact}{Fact}
\def\Inf{{\it Inf}}
\def\Var{{\it Var}}
\newcommand{\argmax}{\operatornamewithlimits{argmax}}
\newcommand{\argmin}{\operatornamewithlimits{argmin}}
\def\ComputeProbability{{\sf CompProb}}
\def\MCCP{\mbox{\sf MC-CompProb}}
\def\BiCP{\mbox{\sf Bi-CompProb}}
\def\MinSeed{\mbox{\sf MinSeed-PCG}}
\def\prob{{\it prob}}
\newcommand{\rmv}[1]{}
\begin{document}
%

\title{ 
Minimizing Seed Set Selection with Probabilistic Coverage Guarantee
	in a Social Network}
\author{Peng Zhang\inst{1} \and Wei Chen\inst{2} \and Xiaoming Sun\inst{3}\thanks{This work was supported in part by the National Natural Science Foundation of China Grant 61170062, 61222202 and the National Program for support of Top-notch Young Professionals.}
\and Yajun Wang\inst{2} \and Jialin Zhang\inst{3}}
\institute{Purdue University \and Microsoft \and Institute of Computing Technology, Chinese Academy of Sciences}

\sloppy

\maketitle
\begin{abstract}
A topic propagating in a social network 
	reaches its tipping point if
	the number of users discussing it in the network exceeds a critical threshold such that
	a wide cascade on the topic is likely to occur.
In this paper, we consider the task of selecting initial seed users of a topic
	with minimum size
	so that {\em with a guaranteed probability} the number of users
	discussing the topic would reach a given threshold.
We formulate the task as an optimization problem called
	{\em seed minimization with probabilistic
	coverage guarantee (SM-PCG)}.
This problem departs from the previous studies on social influence maximization or seed minimization
	because it considers influence coverage with {\em probabilistic} guarantees instead of guarantees on {\em expected} influence coverage.
We show that the problem is not submodular, and thus is harder than previously studied problems based
	on submodular function optimization.
We provide an approximation algorithm and show that it approximates the optimal
	solution with both a multiplicative ratio and an additive error.
The multiplicative ratio is tight while the additive error would be small
	if influence coverage distributions of certain seed sets are well concentrated.
For one-way bipartite graphs we analytically prove the concentration condition and obtain an approximation algorithm
	with an $O(\log n)$ multiplicative ratio and an $O(\sqrt{n})$ additive error, where $n$ is the total number of nodes in the social graph.
Moreover, we empirically verify the concentration condition in
	real-world networks and experimentally demonstrate
	the effectiveness of our proposed algorithm comparing to commonly adopted benchmark algorithms.
\end{abstract}



{\bf Keywords:} social networks, influence diffusion, independent cascade model,
seed minimization

\section{Introduction}
With online social networks such as Facebook and Twitter becoming popular for people to express
their thoughts and ideas, or to chat with each other, online social networks provide a platform for triggering a hot topic and then influencing a large population.
Different from most traditional media (such as TV and newspapers), information spread on social networks mainly base on the trust relationship between individuals. Consider the following scenario: when someone publishes a topic on the online social network, his/her friends will see this topic on the website. If they think it is interesting or meaningful, they may write some comments to follow it or just forward it on the website as a response. Similarly, the comments or forwarding from these friends will attract their own friends,
	leading to more and more people on the social network paying attention to that topic.
When the number of users discussing about this topic on the online social network
	reaches certain critical threshold, this topic becomes a {\em hot topic}, which
	is likely to be surfaced at the prominent place on the social networking
	site (e.g. 10 hot topics of today), and is likely to be picked up by
	traditional media and influential celebrities.
In turn this will generate an even wider cascade causing more people to discuss
	about this topic.

Therefore, making a topic reach the critical threshold
	(also called the {\em tipping point}~\cite{Gladwell02})
	is the crucial step
	to generate huge influence on the topic, which is desirable by companies
	large and small trying to use social networks to promote their products,
	through the so called {\em viral marketing campaigns}.
Besides making the content of the topic attractive and viral, another
	key aspect is to select seed users in the network that initiate the topic
	discussion effectively to trigger a large cascade on the topic.
Due to the cost incurred for engaging seed users (e.g. providing free sample
	products), it is desirable that the size of seed users is minimized.
Moreover, the marketers also need certain probabilistic guarantee on how
	likely the viral marketing campaign could reach the desired critical threshold
	in order to trigger an even larger cascade via hot topic listings,
	traditional media coverages, and celebrity followings.
Hence, the problem at hand is how to select a seed set of users of minimum size
	to trigger a topic cascade such that the cascade size reaches the
	desired critical threshold with guaranteed probability.

In this paper, we formulate the above problem as the following optimization
	problem and call it {\em seed minimization with probabilistic coverage
	guarantee (SM-PCG)}.
A social network is modeled as a directed graph, where nodes represent individuals and directed edges represent the relationships between pairs of individuals.
Each edge is associated with an {\em influence probability}, which means that once a node is activated, it can activate its out-neighbors through the outgoing edges with their
	associated probabilities at the next step.
Our analytical results work for a large class of influence diffusion models that guarantee
	submodularity (the diminishing marginal return property in terms of seed set size),
	but for
	illustration purpose, we adopt the classic {\em independent cascade (IC)
	model}~\cite{kempe2003maximizing} as the influence diffusion model.
In the IC model, initially all seed nodes are activated while others are
	inactive, and at each step, nodes activated at the previous step have
	one chance to activate each of its inactive out-neighbors in the network.
The total number of active nodes after the diffusion process ends is referred as
	the {\em influence coverage} of the initial seed set.
Given such a social network with influence probabilities on edges,
	given a required coverage threshold $\eta$ and a probability threshold $P$,
	the SM-PCG problem is to find a seed set $S^*$ of minimum size
	such that the probability that the influence coverage of $S^*$
	reaches $\eta$ or beyond is at least $P$.

The formulation of the SM-PCG problem significantly departs from previous optimization
	problems based on social influence diffusion
	(e.g.~\cite{kempe2003maximizing,ChenWY09,chen2010scalable,goyal2012minimizing})
	in that it requires the selected seed set to satisfy a {\em probabilistic} coverage
	guarantee, while previous research focuses on {\em expected} coverage guarantee.
For the application of generating a hot topic, we believe that
	it is reasonable to ask for a guarantee on the probability of influence coverage exceeding
	a given threshold, since this provides direct information on the likelihood of
	success of the viral marketing campaign, which is very helpful for marketers to gauge their cost and benefit trade-offs for the campaign.
Merely saying that the expected influence coverage exceeds the required coverage threshold
	is not enough in this case.
To the best of our knowledge, this is the first work that focuses on probabilistic
	influence coverage guarantee among existing studies on social network influence
	optimization problems.

In this paper, we first show that the set functions based on the SM-PCG problem
	are not submodular, which means that it is more difficult than most of the existing
	social influence optimization problems that rely on submodular set function
	optimizations.
Next, we investigate two computation tasks related to SM-PCG problem, one is
	to fix a seed set $S$ and a coverage threshold $\eta$ and compute the probability of
	influence coverage of $S$ exceeding $\eta$, and the other
	is to fix a seed set $S$ and a probability threshold $P$, and compute the maximum
	coverage threshold $\eta$ such that the probability of influence coverage
	of $S$ exceeding $\eta$ is at least $P$.
We show that the first problem is \#P-hard but can be accurately estimated, while
	the second one is \#P-hard to even approximate the value within any nontrivial ratio.
These results further demonstrate the hardness of the problem.

We then adapt the greedy approximation algorithm targeted for expected influence coverage
	problem (which is submodular) to the SM-PCG problem.
Although the adapted algorithm still follows the greedy approach,
	our main contribution is on a detailed analysis,
	which proves that our algorithm approximates the
	optimal solution with both a multiplicative ratio and an additive error.
The multiplicative ratio is due to the greedy approximation algorithm for
	expected influence coverage and is tight, while the additive error is determined
	by the concentration property (in particular the standard deviations)
	of influence coverage distributions of two specific seed sets.
For one-way bipartite graphs where edges are directed from one side to the other side,
	we analytically show that the influence coverage distributions are well concentrated
	and we could reach an additive error of $O(\sqrt{n})$ where $n$ is the total
	number of nodes in the graph.

Finally, using several real-world social networks including a network with influence probability parameters
	obtained from prior work, we empirically validate our approach
	by showing that (a) influence coverage distributions of seed sets are well
	concentrated, and (b) our algorithm selects seed sets with sizes much smaller
	than commonly adopted benchmark algorithms.

To summarize, our contributions include: (a) we propose the study of seed minimization with probabilistic
	coverage guarantee (SM-PCG), which is more relevant to hot topic generation in online social networks and has
	not been studied before;
	(b) we show that neither of the two versions of set functions related to SM-PCG is submodular,
		one version is \#P-hard to compute but allows accurate estimation while the other
		version is \#P-hard to even approximate to any nontrivial ratio;
	(c) we adapt the greedy algorithm targeted for expected coverage guarantee to SM-PCG, and analytically
	show that the adapted algorithm provides an approximation guarantee with a tight multiplicative ratio
	and an additive error depending on the influence coverage concentrations of certain seed sets; and
	(d) we empirically demonstrate the effectiveness of our algorithm using real-world datasets.
	
\subsection{Related Work}
{\em Influence maximization}, as the dual problem of seed minimization, is to find
	a seed set of at most $k$ nodes to maximize the expected influence coverage
	of the seed set.
Domingos and Richardson are the first to formulate influence maximization problem from an
	algorithmic perspective~\cite{domingos2001mining,richardson2002mining}. Kempe et al. first model this problem as a discrete optimization problem \cite{kempe2003maximizing},
	provide the now classic independent cascade and linear threshold diffusion models,
	and establish the optimization framework based on submodular set function optimization.
A number of studies follow this approach and provide more efficient
	influence maximization algorithms
	(e.g.~\cite{ChenWY09,chen2010scalable,ChenYZ10,simpath}).
In~\cite{long2011minimizing}, Long et al. first study independent cascade and linear threshold diffusion models from a minimization perspective.
In~\cite{goyal2012minimizing}, Goyal et al. provide a bicriteria approximation algorithm
	to minimize the size of the seed set with its expected influence coverage
	reaching a given threshold.
Recently, a continuous time diffusion model is proposed and studied in ~\cite{rodriguez2012influence} 	and ~\cite{du2013scalable}.
All these existing studies focus on expected influence coverage, and rely on
	the submodularity of expected influence coverage function for the optimization task.
In contrast, we are the first to address probabilistic coverage guarantee for
	the seed minimization problem, which is not submodular.

Seed minimization with non-submodular influence coverage
	functions under different diffusion models
	have been studied.
Chen~\cite{chen2009approximability} studies the seed minimization problem
	under the fixed threshold model, where a node is activated when its active neighbors
	exceed its fixed threshold.
He shows that the problem cannot be approximated within any polylogarithmic factor
	(under certain complexity theory assumption).
Goldberg and Liu~\cite{GoldbergL13} study another variant of fixed threshold model
	and provide an approximation algorithm based on the linear programming technique.
Influence coverage functions in both models are deterministic and non-submodular.
However, these models are quite different from the model we study in this paper, and
	thus their results and techniques are not applicable to our problem.

The rest of this paper is organized as follows. We define the diffusion model
	and the optimization problem SM-PCG in Section~\ref{sec:model}, and provide
	related results and tools in Section~\ref{sec:tools}, including
	the non-submodularity of the set functions for SM-PCG.
In Section~\ref{sec:infcomp} we investigate the computation problems related to
	SM-PCG.
In Section~\ref{sec:algo} we provide our algorithm for general graphs and
	analyze its approximation guarantee.
In Section~\ref{sec:bipartite} we provide algorithmic and analytical results for
	one-way bipartite graphs.
We empirically validate our concentration assumption on influence coverage distributions and
	the effectiveness of our algorithm in Section~\ref{sec:exp}, and
	conclude the paper in Section~\ref{sec:conclusion} with a discussion on potential
	future directions.
\rmv{Due to the space constraint, some of the technical proofs and empirical results are omitted,
	and they are included in our full technical report \cite{fullreport14}.}

\section{Model and Problem}
\label{sec:model}
In our problem, a social network is modeled as a directed {\em social graph} $G=(V,E)$, where
	$V$ is the set of $n$ vertices or nodes representing individuals in a social network,
	and $E$ is the set of directed edges representing influence relationships between
	pairs of individuals.
Each edge $(u,v)\in E$ is associated with an {\em influence probability} $p_{u,v}$.
Intuitively, $p_{u,v}$ is the probability that node $u$ activates node $v$ after $u$ is activated.
The influence diffusion process in the social graph $G$ follows the independent
	cascade (IC) model, a randomized process summarized in \cite{kempe2003maximizing}.
Each node has two states, {\em inactive} or {\em active}.
The influence diffusion proceeds in discrete time steps, and we say that a node $u$
	{\em is activated at time $t$} if $t$ is the first time step at which $u$ becomes active.
At the initial time step $t=0$, a subset of nodes $S\subseteq V$ is selected as active nodes,
	defined as the \emph{seed set}, while other nodes are inactive.
For any time $t\ge 1$, when a node $u$ is activated at step $t-1$, $u$ is given a single chance to activate each of its inactive out-neighbors $v$ through edge $(u,v)$ independently with probability $p_{u,v}$ at step $t$.
Once activated, a node stays as active in the remaining time steps.
The influence diffusion process stops when there is no new activation at a time step.

Given a target set $U\subseteq V$, let $\Inf_U(S)$ be the random variable denoting the number
	of active nodes in $U$ after the diffusion process starting from the seed set $S$ ends.
When the context is clear, we usually omit the subscript $U$ and use $\Inf(S)$ to represent
	this random variable, and we refer $\Inf(S)$ as the {\em influence coverage} of seed
	set $S$ (for target set $U$).
The optimization problem we are trying to solve is to find a seed set $S$ of minimum size such that
	the influence coverage of $S$ is at least a required threshold with a required probability
	guarantee.
The formal problem is defined below.

\begin{definition}[Seed minimization with probabilistic coverage guarantee]
We define the problem of {\em seed minimization with probabilistic coverage guarantee (SM-PCG)}
	as follows.
The input of the problem includes the social graph $G=(V,E)$, the influence probabilities
	$p_{u,v}$'s on edges, the target set $U$, a coverage threshold
	$\eta < |U|$,\footnote{We believe that $\eta < |U|$ is reasonable
	for the application scenarios we described since typically
	it requires only a fraction of the entire target node set to make a topic hot.
For the case of $\eta = |U|$, we also worked out a separate solution for
		one-way bipartite graphs, 
		and describe our algorithm in Appendix.
	}  a probability threshold $P \in (0,1)$.
The problem is to find the minimum size seed set $S^*$ such that $S^*$ can activate at least $\eta$ nodes in $U$ with probability $P$, that is,
\[S^* = \argmin_{S:\Pr(\Inf(S)\geq \eta )\geq P} |S|.
\]
\end{definition}
	

The following theorem shows the hardness of the SM-PCG problem.

\begin{theorem} \label{thm:hard}
The problem SM-PCG is NP-hard, and for any $\varepsilon > 0$, it cannot be approximated
	within a ratio of $(1-\varepsilon)\ln n$ unless NP has $n^{O(\log\log n)}$-time deterministic
	algorithms.
\end{theorem}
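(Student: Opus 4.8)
The plan is to prove the theorem by a reduction from the \textsc{Set Cover} problem, which is the natural source of a $\ln n$ lower bound: by Feige's classical result, no polynomial-time algorithm approximates \textsc{Set Cover} over a universe of size $m$ within a factor $(1-\varepsilon)\ln m$ unless $\mathrm{NP}\subseteq\mathrm{DTIME}(m^{O(\log\log m)})$. The reduction I have in mind maps a \textsc{Set Cover} instance to an SM-PCG instance whose optimum is \emph{exactly} the \textsc{Set Cover} optimum, so that any multiplicative approximation guarantee (and NP-hardness of the decision version) transfers verbatim. The two points that need care are that allowing arbitrary vertices to be used as seeds must not help beyond what a cover achieves, and the accounting relating the number of nodes $n$ to the universe size $m$.

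Concretely, given a \textsc{Set Cover} instance with universe $X=\{x_1,\dots,x_m\}$ and sets $\mathcal{S}=\{T_1,\dots,T_k\}$, I would build a one-way bipartite social graph: a ``set node'' $s_j$ for each $T_j$, an ``element node'' $v_i$ for each $x_i$, and one extra node $d$ with no incoming edge; add a directed edge $(s_j,v_i)$ with influence probability $p_{s_j,v_i}=1$ exactly when $x_i\in T_j$, and no other edges. Take the target set $U=\{v_1,\dots,v_m\}\cup\{d\}$, so that $|U|=m+1$; take the coverage threshold $\eta=m<|U|$; and take any probability threshold $P\in(0,1)$. Since every influence probability equals $1$, the diffusion is deterministic and shallow: from a seed set $S$ the active nodes of $U$ are precisely $(S\cap U)\cup\{v_i:\exists\,s_j\in S\text{ with }x_i\in T_j\}$, with $d$ active only if $d\in S$. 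Hence $\Pr(\Inf_U(S)\ge\eta)\ge P$ holds iff at least $m$ of the $m+1$ target nodes are active; the instance has $n=m+k+1$ nodes.

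The key lemma is $\mathrm{OPT}_{G}=\mathrm{OPT}_{\mathrm{SC}}$. The inequality $\mathrm{OPT}_{G}\le\mathrm{OPT}_{\mathrm{SC}}$ is immediate: the set nodes corresponding to an optimal cover activate every $v_i$. For $\mathrm{OPT}_{G}\ge\mathrm{OPT}_{\mathrm{SC}}$, take a feasible seed set $S^*$, put $A=\{j:s_j\in S^*\}$, and let $Y=\bigcup_{j\in A}T_j$ be the elements it covers. A short case analysis on whether $d\in S^*$ shows that in either case $|S^*|\ge|A|+|X\setminus Y|$: if $d\notin S^*$ then every $v_i$ with $x_i\notin Y$ must itself be a seed, while if $d\in S^*$ then all but at most one such $v_i$ must be a seed, but then $d\in S^*$ is counted as well. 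Augmenting $\{T_j:j\in A\}$ with one set containing each element of $X\setminus Y$ then yields a cover of size at most $|S^*|$. This optimum-preservation argument — in particular the part certifying that seeding individual element nodes, or the dummy $d$, gains essentially nothing over seeding set nodes — is the step I expect to be the most delicate. A second point requiring care is the parameter accounting: to keep the coefficient $(1-\varepsilon)$ in front of $\ln n$ rather than a smaller constant, one must use a hard \textsc{Set Cover} family with $k\le m^{1+o(1)}$ sets, so that $\ln n=(1+o(1))\ln m$ and Feige's $(1-\varepsilon)\ln m$ threshold becomes a $(1-\varepsilon-o(1))\ln n$ threshold for SM-PCG. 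Since $\varepsilon>0$ is arbitrary, this proves the claimed inapproximability, and NP-hardness follows a fortiori from the exact preservation of the optimum.
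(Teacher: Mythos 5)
Your proposal is correct and follows essentially the same route as the paper: a reduction from Set Cover with probability-$1$ edges oriented from set nodes to element nodes, so the diffusion is deterministic and feasibility is equivalent to covering, followed by an appeal to Feige's $(1-\varepsilon)\ln n$ threshold. If anything, your version is more careful than the paper's own proof on two points it glosses over: the dummy node keeps $\eta = m < |U|$ as the problem definition requires (the paper's reduction sets $\eta = |U|$), and your counting lemma together with the requirement that the hard Set Cover family have $k \le m^{1+o(1)}$ sets makes explicit how the $\ln$-factor transfers from the universe size to the total node count $n = m+k+1$, a step the paper simply asserts.
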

\begin{proof}
The problem of Set Cover is a special case of SM-PCG.
We can represent an instance of Set Cover as a bipartite graph $G=(U,V,E)$, where
	$U$ is the set of elements, and $V$ is the set of subsets of $U$, and an edge $(u,v)\in E$
	for $u\in U$ and $v\in V$ means $u$ is in the subset $v$.
The problem is to find a subset $S \subseteq V$ of minimize size such that all elements in $U$ are covered,
	i.e. all nodes in $U$ are neighbors of some node in $S$.

We can encode the Set Cover instance as an instance of SM-PCG as follows.
We use the same graph $G$ as the social graph for SM-PCG with edges oriented from
	nodes in $V$ to nodes in $U$, and the influence probability of
	all edges are $1$.
The target set is $U$, with coverage threshold $\eta = |U|$.
The probability threshold $P=1/2$ (actually since the diffusion in this setting is deterministic,
	any $P\in (0,1)$ works).
If the Set Cover instance has a solution, then every node in $U$ must be connected to some node
	in $V$.
In this case, for any solution $S$ for the above SM-PCG instance, if $S\cap U\ne \emptyset$,
	we can replace each node $u\in S\cap U$ with its neighbor node $v\in V$ so that we find
	a set $S'\subset V$ and it must be the case that $|S'|=|S|$ and $S'$ is also a solution to
	SM-PCG.
Then $S'$ must be a solution to the Set Cover instance.
Conversely, any solution to the Set Cover instance is also a solution to the SM-PCG instance.
Since Set Cover problem is NP hard, and cannot be approximated within a ratio of
	$(1-\varepsilon)\ln n$ unless NP has $n^{O(\log\log n)}$-time deterministic
	algorithms~\cite{feige98}, so does the SM-PCG problem.
\qedsymbol \end{proof}

With the above hardness result, we set our goal as to find
	algorithms that solve the SM-PCG problem with approximation ratio close to
	$\ln n$.

\section{Useful Results and Tools}\label{sec:tools}

In this section, we provide some useful results and tools in preparation for our algorithm design.

Almost all previous work on social influence maximization or seed minimization
	is based on submodular function optimization techniques.
Consider a set function $f(\cdot)$ which maps subsets of a finite ground set into real number set $\mathbb{R}$. We say that $f(\cdot)$ is \emph{submodular} if for any subsets $S\subseteq T$ and any element $u\not\in T$, $f(S\cup \{u\})-f(S)\geq f(T\cup \{u\})-f(T)$.
Moreover, we say that $f(\cdot)$ is \emph{monotone} if for any subsets $S\subseteq T$, $f(S)\leq f(T)$.

Consider a monotone and submodular function $f(\cdot)$ on subsets of nodes
	in the social graph $G=(V,E)$.
Suppose that each node $v\in V$ has a cost $c(v)$, given by a cost function $c:V\rightarrow \mathbb{R}^+$.
The cost of a subset $S$ is defined as $c(S) = \sum_{v\in S} c(v)$.
In~\cite{goyal2012minimizing}, Goyal et al. investigate the problem of finding a subset $S\subseteq V$
	with minimum cost such that $f(S)$ is at least some given threshold $\eta$.
As in many optimization tasks for submodular functions, the following greedy algorithm is applied
	to solve the problem: starting from the emptyset $S_0=\emptyset$, in the $i$-th iteration
	with $i=1,2,\ldots$, find a node $v_i$ that provides the largest marginal gain on $f$ per-unit
	cost, that is find
\[
v_i = \argmax_{v\in V\setminus S_{i-1}} \frac{f(S_{i-1}\cup \{v\})-f(S_{i-1})}{c(v)},
\]
and add $v_i$ to $S_{i-1}$ to obtain $S_i$; continue this process until iteration $j$
	in which $f(S_j)\ge \theta$, where $\theta$ is a threshold that could be $\eta$ or some other value chosen by the algorithm as the stopping criteria,
	and output $S_j$ as the selected subset $S$.
However, 	generally computing $f(\cdot)$ exactly is \#P-hard, but for most influence spread models, it can be estimated by Monte Carlo simulation as accurately as possible. We say an estimation $\hat{f}(\cdot)$ is
a \emph{$\gamma$-multiplicative error estimation} of $f(\cdot)$, if for any subset $S$, $|\hat{f}(S) - f(S)| \le \gamma f(S)$.	
Goyal et al. show the following bicriteria approximation result for the above greedy algorithm when $\gamma=0$.

%
\begin{theorem}
\cite{goyal2012minimizing} Let $G=(V,E)$ be a social graph, with cost function $c:V\rightarrow \mathbb{R}^+$ on the nodes of the graph.
Let $f(\cdot)$ be a nonnegative, monotone and submodular set function on the subsets of nodes.
Given a threshold $0<\eta \le f(V)$,
	let $S^*\subseteq V$ be a subset of minimum cost such that $f(S^*)\geq \eta$. Let $\varepsilon>0$ be any shortfall and let $S$ be the greedy solution satisfying
	$f(S)\ge \eta-\varepsilon$. Then, we have $c(S)\leq c(S^*)\cdot(1+\ln \frac{\eta}{\varepsilon})$.
When the costs on nodes
	are uniform, the approximation factor can be improved to $\lceil \ln\frac{\eta}{\varepsilon} \rceil$.
\label{bicriteria}
\end{theorem}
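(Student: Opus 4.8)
The proof I have in mind is the classical greedy analysis for submodular cover (Wolsey's argument), specialized to the shortfall formulation. Write $S_0=\emptyset\subseteq S_1\subseteq S_2\subseteq\cdots$ for the chain produced by the greedy rule, where $S_i=S_{i-1}\cup\{v_i\}$ and $v_i$ maximizes marginal gain per unit cost, and let $\rho_i:=\eta-f(S_{i-1})$ denote the \emph{residual} before the $i$-th step; by monotonicity $\rho_1\le\eta$ and the $\rho_i$ are nonincreasing. The heart of the argument is to show that whenever $\rho_i>0$,
$$\frac{f(S_i)-f(S_{i-1})}{c(v_i)}\;\ge\;\frac{\rho_i}{c(S^*)}.$$
This is where submodularity and the optimality of $S^*$ enter: since $f(S^*)\ge\eta$ and $f$ is monotone, $f(S_{i-1}\cup S^*)\ge\eta$, so adding $S^*\setminus S_{i-1}$ to $S_{i-1}$ gains at least $\rho_i$; by submodularity this total gain is at most $\sum_{v\in S^*\setminus S_{i-1}}\bigl(f(S_{i-1}\cup\{v\})-f(S_{i-1})\bigr)$, so an averaging argument produces a $v\in S^*$ with marginal gain at least $\tfrac{c(v)}{c(S^*)}\rho_i$, i.e.\ with ratio at least $\rho_i/c(S^*)$, and the greedy pick does at least as well.

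From the displayed inequality, rearranging gives the recurrence $\rho_{i+1}\le\rho_i\bigl(1-c(v_i)/c(S^*)\bigr)\le\rho_i\,e^{-c(v_i)/c(S^*)}$. Let $S=S_j$ be the greedy output, i.e.\ $j$ is the first index with $f(S_j)\ge\eta-\varepsilon$; then $f(S_{i-1})<\eta-\varepsilon$ for all $i\le j$, so $\rho_i>\varepsilon>0$ throughout $i=1,\dots,j$ and the recurrence applies at every such step. Multiplying it over $i=1,\dots,j-1$ and using $\sum_{i=1}^{j-1}c(v_i)=c(S_{j-1})$ yields
$$\varepsilon\;<\;\rho_j\;\le\;\eta\,\exp\!\Bigl(-\tfrac{1}{c(S^*)}\textstyle\sum_{i=1}^{j-1}c(v_i)\Bigr)\;=\;\eta\,e^{-c(S_{j-1})/c(S^*)},$$
which rearranges to $c(S_{j-1})<c(S^*)\ln(\eta/\varepsilon)$. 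It remains to pay for the single last node $v_j$: applying the greedy inequality at step $j$, the ratio of $v_j$ is at least $\rho_j/c(S^*)$ while its marginal gain is at most $\rho_j$ (working with $\min(f,\eta)$ if necessary, which is still monotone, submodular, and unchanged below the threshold, so that the residual stays nonnegative), hence $c(v_j)\le c(S^*)$. Adding this to the previous bound gives $c(S)=c(S_{j-1})+c(v_j)\le c(S^*)\bigl(1+\ln(\eta/\varepsilon)\bigr)$.

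For uniform costs one sets every $c(v)=1$ and $k=|S^*|$, so $c(v_j)=1\le k$ is immediate and the count can be made integral: the recurrence becomes $\rho_{i+1}\le\rho_i(1-1/k)$, hence $\rho_j\le\eta(1-1/k)^{j-1}\le\eta\,e^{-(j-1)/k}$, and $\rho_j>\varepsilon$ forces $j-1<k\ln(\eta/\varepsilon)$, so $j\le\lceil k\ln(\eta/\varepsilon)\rceil\le k\,\lceil\ln(\eta/\varepsilon)\rceil$, i.e.\ approximation factor $\lceil\ln(\eta/\varepsilon)\rceil$. In this whole plan the only conceptually nontrivial step is the per-iteration inequality — it is the unique place submodularity is used and the step that fails for merely monotone $f$; everything afterwards is arithmetic, modulo the small bookkeeping needed to charge the final node and obtain the additive $+1$ term (which, as noted, is vacuous in the uniform-cost case that our application actually uses).
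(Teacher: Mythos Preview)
The paper does not actually prove this theorem; it is quoted as a result of Goyal et al.\ and used as a black box. Your argument is the standard Wolsey-type submodular-cover analysis and is correct, including the truncation $\min(f,\eta)$ needed to charge the final node. It is also the same idea the paper itself uses when it proves its own refinement (Theorem~\ref{thm:greedy}): Lemma~\ref{le:diff} there is precisely your per-iteration inequality in the uniform-cost case, and the subsequent recurrence and $1+z\le e^z$ telescoping match yours line for line.
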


Based on their idea, for the case of uniform node cost and $\eta<f(V)$, we slightly improve their result by removing the bicriteria restriction and generalizing to the case of $\gamma\ge 0$.
	

\begin{theorem}
Let $G=(V,E)$ be a social graph, and
let $f(\cdot)$ be a nonnegative, monotone and submodular set function on the subsets $|V|$.
Given a threshold $0< \eta < f(V)$, 
	let $S^*\subseteq V$ be a subset of minimum size such that $f(S^*)\geq \eta$, and
	$S$ be the greedy solution using a $\gamma$-multiplicative error estimation function $\hat{f}(\cdot)$
	with the stopping criteria $\hat{f}(S)\ge (1+\gamma)\eta$.
For any $0\le \varepsilon_0 \le 1$, for any $0\le \gamma\le  \frac{\varepsilon_0(f(V)-\eta)}{8|V|(f(V)+\eta|V|)}$, we have
	$f(S)\ge \eta$, and
$|S| \le \alpha |S^*|+1$ where $\alpha=\max \{\left\lceil \ln \left(\frac{(1+\varepsilon_0)\eta |V|}{f(V)-\eta }\right)  \right\rceil, 0\}$.
\label{thm:greedy}
\end{theorem}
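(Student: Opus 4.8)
The plan is to adapt the classical greedy analysis for submodular cover underlying Theorem~\ref{bicriteria}, while exploiting the two features that distinguish this setting: the inflated stopping threshold $(1+\gamma)\eta$ and the strict inequality $\eta<f(V)$. The bound $f(S)\ge\eta$ is essentially free: greedy halts at the first step $j$ with $\hat f(S_j)\ge(1+\gamma)\eta$, and since $\hat f$ is a $\gamma$-multiplicative estimate, $\hat f(S_j)\le(1+\gamma)f(S_j)$, so $f(S_j)\ge\eta$. Hence everything reduces to showing that greedy halts within $\alpha|S^*|+1$ steps. For that, I would first establish, at every step $i$ before termination, \emph{two} lower bounds on the true marginal gain $f(S_i)-f(S_{i-1})$ of the node $v_i$ chosen by greedy, both obtained from submodularity by an averaging argument and both corrected for the fact that greedy maximizes the \emph{estimated} marginal (this correction costs at most $4\gamma f(V)$, using $|\hat f(\cdot)-f(\cdot)|\le\gamma f(V)$ twice). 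The first uses feasibility of $S^*$: since $f(S^*\cup S_{i-1})\ge f(S^*)\ge\eta$, some node of $S^*\setminus S_{i-1}$ has true marginal at least $(\eta-f(S_{i-1}))/|S^*|$, giving $f(S_i)-f(S_{i-1})\ge(\eta-f(S_{i-1}))/|S^*|-4\gamma f(V)$. The second uses the whole ground set: $f(V\cup S_{i-1})=f(V)$, so some node has true marginal at least $(f(V)-f(S_{i-1}))/|V|$, giving $f(S_i)-f(S_{i-1})\ge(f(V)-f(S_{i-1}))/|V|-4\gamma f(V)$. Here the hypothesis on $\gamma$ enters for the first time: it forces $4\gamma f(V)\le\frac{f(V)-\eta}{2|V|}$, so whenever $f(S_{i-1})<\eta$ the second bound is \emph{strictly positive} — greedy always makes a non-vanishing amount of progress, which is exactly the ingredient a non-submodular "exact-threshold" cover needs and which the bicriteria form of Theorem~\ref{bicriteria} sidesteps.

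Next I would track the residual $a_i=\eta-f(S_i)$. From the first bound, while $a_{i-1}>0$, $a_i\le a_{i-1}(1-1/|S^*|)+4\gamma f(V)$; shifting by the fixed point of this recurrence gives $a_i\le 4\gamma f(V)\,|S^*|+\eta\,e^{-i/|S^*|}$, which keeps the geometric rate $1-1/|S^*|$ intact. Since $\alpha\ge\ln\!\frac{(1+\varepsilon_0)\eta|V|}{f(V)-\eta}$, after $\alpha|S^*|$ steps the second term has dropped to at most $\frac{f(V)-\eta}{(1+\varepsilon_0)|V|}$, and the precise form of the $\gamma$-bound — in particular its $(f(V)+\eta|V|)$ denominator — is calibrated so that $4\gamma f(V)\,|S^*|$ fits inside the $(1+\varepsilon_0)$-slack, leaving $a_{\alpha|S^*|}\le\frac{f(V)-\eta}{|V|}$. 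Finally the second marginal bound closes the gap in a single extra step (the "$+1$"): starting from $f(S_{\alpha|S^*|})\ge\eta-\frac{f(V)-\eta}{|V|}$, one more greedy pick raises $f$ by at least $\frac{f(V)-\eta}{|V|}-4\gamma f(V)$, and the $\gamma$-bound is exactly what guarantees this overshoots not merely $\eta$ but $\frac{(1+\gamma)\eta}{1-\gamma}$, so $\hat f$ crosses $(1+\gamma)\eta$ and greedy stops; hence $|S|\le\alpha|S^*|+1$. (If greedy stopped earlier, or if $\alpha=0$ because $f(V)$ is so large relative to $\eta$ that a single step already suffices, the bound is immediate.)

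I expect the main obstacle to be the quantitative bookkeeping around $\gamma$: one must check that the single value $\gamma\le\frac{\varepsilon_0(f(V)-\eta)}{8|V|(f(V)+\eta|V|)}$ is simultaneously small enough to (i) keep the per-step progress floor positive, (ii) confine the accumulated additive error $4\gamma f(V)\,|S^*|$ (bounded crudely via $|S^*|\le|V|$) within the $(1+\varepsilon_0)$-slack of $\alpha$, and (iii) let the last step clear the inflated threshold $\frac{(1+\gamma)\eta}{1-\gamma}$ rather than just $\eta$. These three requirements are what pin down the exact shape of the bound, and reconciling them — together with the interplay between the $\lceil\cdot\rceil$ in $\alpha$ and the geometric-decay count — is where the real work lies; the submodularity and estimation-error arguments themselves are routine.
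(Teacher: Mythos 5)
Your overall architecture is the same as the paper's: a phase in which the residual to $\eta$ decays geometrically via the averaging bound against $S^*$ (the paper's Lemma~\ref{le:diff}), followed by one extra step justified by the averaging bound against the full ground set $V$, with the estimation error folded in additively at each step and the hypothesis on $\gamma$ used to keep the error within the $\varepsilon_0$-slack of $\alpha$ and to make the last step clear the inflated threshold $(1+\gamma)\eta$. The derivation of $f(S)\ge\eta$ and the degenerate cases are handled the same way. So the question is only whether your quantitative bookkeeping closes, and there it has a genuine gap.

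The gap is in replacing the multiplicative error guarantee $|\hat f(S)-f(S)|\le\gamma f(S)$ by the absolute bound $\gamma f(V)$ throughout the first phase and then accumulating it over up to $|S^*|\le|V|$ steps. Your recurrence gives $a_i\le \eta e^{-i/|S^*|}+4\gamma f(V)\,|S^*|$, and you need $4\gamma f(V)\,|V|\le\frac{\varepsilon_0}{1+\varepsilon_0}\cdot\frac{f(V)-\eta}{|V|}$ for the claimed conclusion $a_{\alpha|S^*|}\le\frac{f(V)-\eta}{|V|}$. Under the stated hypothesis $\gamma\le\frac{\varepsilon_0(f(V)-\eta)}{8|V|(f(V)+\eta|V|)}$ this amounts to requiring $(1+\varepsilon_0)|V|f(V)\le 2\bigl(f(V)+\eta|V|\bigr)$, which fails badly in general: e.g.\ with $\eta=f(V)/2$ and $\varepsilon_0=1$ it forces $|V|\le 2$. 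Concretely, at the maximal allowed $\gamma$ with $f(V)=|V|=n$, $\eta=n/2$, your accumulated-error bound is about $1/2$ while the available slack is about $1/4$, so the chain of inequalities does not close (and when $\eta\ll f(V)$ it is off by a factor of order $|V|$, unless you separately argue that $|S^*|$ must then be small, which you do not). The paper avoids this precisely by exploiting the multiplicative form of the error on the sets actually encountered before termination: every such set has $\hat f<(1+\gamma)\eta$, hence $f<\frac{1+\gamma}{1-\gamma}\eta$, so the per-step correction is $O(\gamma\eta)$, not $O(\gamma f(V))$; the term $\eta|V|$ in the denominator of the $\gamma$-bound is calibrated to the accumulation $O(\gamma\eta|V|)$, and the $f(V)$ term only to the single final step where an $O(\gamma f(V))$ error is affordable. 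A second, smaller inconsistency: you spend the entire $\varepsilon_0$-slack to land at $a\le\frac{f(V)-\eta}{|V|}$, but then the last step gains only $\frac{f(V)-\eta}{|V|}-4\gamma f(V)$, which reaches at most $\eta-4\gamma f(V)$, not $\frac{(1+\gamma)\eta}{1-\gamma}$; the phase-1 target must be strictly below $\frac{f(V)-\eta}{|V|}$ by enough to absorb both the last step's error and the threshold inflation $\approx 2\gamma\eta$, which your calibration as written does not budget for (you flag it as item (iii) but the arithmetic you give leaves no room for it). Both issues are fixable by adopting the paper's finer error bound (or by bounding $|S^*|=O(\eta|V|/f(V))$ via the averaging argument), but as stated the proof does not go through under the theorem's hypothesis on $\gamma$.
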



Note that when $\eta = \Theta(f(V))$, we have $\gamma\le \frac{\varepsilon_0(f(V)-\eta)}{8|V|(f(V)+\eta|V|)} = \Theta(\frac{\varepsilon_0}{|V|^2})$.

Let $S_i$ be the set containing the first $i$ seeds generated by the greedy algorithm with estimation $\hat{f}(\cdot)$. Let $\eta_i = \eta - f(S_i)$ and $\hat{\eta}_i = (1+\gamma)\eta - \hat{f}(S_i)$. Let $k$ be the size of $S^*$.
\begin{lemma}
For any $S\subset V$ with $f(S)<\eta$, there exists a node $x\in V\setminus S$ satisfying $f(S\cup \{x\}) - f(S) \ge \frac{\eta -f(S)}{k}$.
\label{le:diff}
\end{lemma}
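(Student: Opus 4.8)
The plan is to use the standard greedy-versus-optimum argument for monotone submodular functions. Write $S^* = \{v_1, v_2, \dots, v_k\}$ in some fixed order, and for $0 \le i \le k$ set $T_i = S \cup \{v_1, \dots, v_i\}$, so $T_0 = S$ and $T_k = S \cup S^*$. By monotonicity, $f(T_k) \ge f(S^*) \ge \eta$, hence $f(T_k) - f(S) \ge \eta - f(S)$. On the other hand, telescoping gives
\[
f(T_k) - f(S) = \sum_{i=1}^{k} \bigl( f(T_i) - f(T_{i-1}) \bigr).
\]
Now I would apply submodularity to each summand: since $S = T_0 \subseteq T_{i-1}$ and $v_i \notin T_{i-1}$ (or the term is zero), we have $f(T_i) - f(T_{i-1}) = f(T_{i-1} \cup \{v_i\}) - f(T_{i-1}) \le f(S \cup \{v_i\}) - f(S)$. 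Combining the two displays yields
\[
\eta - f(S) \;\le\; \sum_{i=1}^{k} \bigl( f(S \cup \{v_i\}) - f(S) \bigr).
\]

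By averaging over the $k$ terms, there must exist an index $i^\star$ with $f(S \cup \{v_{i^\star}\}) - f(S) \ge \frac{\eta - f(S)}{k}$; take $x = v_{i^\star}$. It remains to check $x \in V \setminus S$: if all of $v_1, \dots, v_k$ lay in $S$ then $S^* \subseteq S$, so monotonicity would give $f(S) \ge f(S^*) \ge \eta$, contradicting the hypothesis $f(S) < \eta$; moreover the chosen $v_{i^\star}$ has strictly positive marginal gain (as $\eta - f(S) > 0$), so in particular $v_{i^\star} \notin S$. This completes the argument.

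There is no real obstacle here — the only points requiring a word of care are (i) handling the possibility that some $v_i$ already lies in an intermediate $T_{i-1}$, which only makes that marginal term smaller or zero and so does not affect the inequality, and (ii) confirming the selected node is outside $S$, which follows from the strict positivity of its marginal gain.
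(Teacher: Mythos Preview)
Your proof is correct and is essentially the same argument as the paper's: both rest on the submodularity inequality $f(S\cup S^*)-f(S)\le \sum_{v\in S^*}\bigl(f(S\cup\{v\})-f(S)\bigr)$ combined with $f(S\cup S^*)\ge f(S^*)\ge\eta$. The only cosmetic difference is that the paper phrases it as a proof by contradiction (assuming every marginal gain is strictly below $(\eta-f(S))/k$ and deriving $f(S\cup S^*)<\eta$), whereas you telescope and average directly; your extra care in verifying $v_{i^\star}\notin S$ via strict positivity of the marginal gain is a nice touch that the paper leaves implicit.
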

\begin{proof}
Assume $\forall x\in V\setminus S, f(S\cup \{x\}) - f(S) < \frac{\eta-f(S)}{k}$. Let $S'=S^*\setminus S$.
\begin{eqnarray*}
f(S^* \cup S) &\le & f(S) +\sum_{x\in S'} \left( f(S\cup \{x\} - f(S) \right) \quad \mbox{(by submodularity of $f(\cdot)$)} \\
&< & f(S) + k\cdot \frac{\eta- f(S)}{k} \\
&= & \eta.
\end{eqnarray*}
It is a contradiction, since $f(S^*\cup S)\ge f(S^*) \ge \eta$. Thus, the lemma holds.
\qedsymbol
\end{proof}

\noindent {\it Proof of Theorem~\ref{thm:greedy}}.
Since for any seed set $S$, $|\hat{f}(S)- f(S)|\le \gamma f(S)$, we have
\begin{equation}
(1-\gamma)f(S) \le \hat{f}(S) \le (1+\gamma)f(S),
\label{eq:esterr1}
\end{equation}
and
\begin{equation}
\frac{1}{1+\gamma}\hat{f}(S) \le f(S) \le \frac{1}{1-\gamma}\hat{f}(S).
\label{eq:esterr2}
\end{equation}
For the output seed set $S$ of greedy algorithm with stopping criteria $\hat{f}(S)\ge (1+\gamma)\eta$, 
it is easy to see $f(S)\ge \frac{\hat{f}(S)}{1+\gamma} \ge \eta$. Thus, we mainly focus on proving inequality $|S| \le \alpha |S^*|+1$.

Let
\[ \varepsilon = \frac{f(V)}{1+\gamma}\left(\frac{1}{|V|} -2\gamma \right) -\frac{\eta}{(1-\gamma)|V|} . \]
If $\varepsilon \ge \eta$, we claim that $S$ only contains one seed. By a similar analysis of Lemma~\ref{le:diff}, there exists a node $x\in V$ satisfying $f(\{x\}) \ge \frac{f(V)}{|V|}$. Then,
\begin{eqnarray*}
\hat{f}(\{x\}) &\ge & (1-\gamma)f(\{x\}) \\
&\ge & \frac{(1-\gamma)f(V)}{|V|} \\
&\ge & f(V)\left( \frac{1}{|V|} -2\gamma \right) \\
&\ge & (1+\gamma)\varepsilon \\
&\ge & (1+\gamma) \eta.
\end{eqnarray*}
Thus, $\hat{f}(S)\ge (1+\gamma)\eta$ and $|S| \le \alpha |S^*|+1$. In the following, we prove the case of $\varepsilon < \eta$.

We first consider $S_l$ satisfying $\hat{f}(S_l)\ge (1+\gamma)(\eta - \varepsilon)$ and $\hat{f}(S_{l-1})< (1+\gamma)(\eta - \varepsilon)$.
We compute the difference between $\hat{f}(S_i)$ and $\hat{f}(S_{i-1})$, for all $1\le i\le l-1$. Suppose $S_0=\emptyset$.
Since
\begin{eqnarray*}
f(S_{i-1}) &\le & \frac{\hat{f}(S_{i-1})}{1-\gamma} \\
&<& \frac{1+\gamma}{1-\gamma}(\eta-\varepsilon) \\
&=& \eta - \frac{1}{1-\gamma}((1+\gamma)\varepsilon - 2\gamma\eta) \\
&\le & \eta-(\varepsilon - 2\gamma\eta) \\
&= & \eta-\left(\frac{f(V)}{1+\gamma}\left(\frac{1}{|V|} -2\gamma \right) -\frac{\eta}{(1-\gamma)|V|} - 2\gamma\eta\right) \\
&\le &\eta - \left( \frac{f(V)-\eta}{|V|} -\gamma\left( \frac{f(V)+\eta}{|V|} + 2f(V) + 2\eta \right) \right) \quad \mbox{(by $\gamma\ge 0$)} \\
&\le &\eta - \left( \frac{f(V)-\eta}{|V|} -4\gamma\left( f(V) + \eta \right) \right) \\
&\le & \eta -  \frac{f(V)-\eta}{2|V|} \quad \mbox{(by $\gamma\le \frac{\varepsilon_0(f(V)-\eta)}{8|V|(f(V)+\eta|V|)}$)} \\
&\le & \eta.
\end{eqnarray*}
By Lemma~\ref{le:diff}, there exists a node $x\in V\setminus S_{i-1}$ satisfying $f(S_{i-1}\cup \{x\}) -f(S_{i-1})\ge \frac{\eta_{i-1}}{k}$.

\begin{eqnarray*}
&& \hat{f}(S_{i}) -\hat{f}(S_{i-1}) \\
&\ge & \hat{f}(S_{i-1}\cup \{x\}) -\hat{f}(S_{i-1}) \\
&\ge & (1-\gamma)f(S_{i-1}\cup \{x\}) - (1+\gamma)f(S_{i-1}) \quad \mbox{(by~\eqref{eq:esterr1})}\\
&= & f(S_{i-1}\cup \{x\})-f(S_{i-1}) -\gamma(f(S_{i-1}\cup \{x\})+f(S_{i-1})) \\
&\ge & \frac{\eta-f(S_{i-1})}{k} - \frac{2\gamma(1+\gamma)}{1-\gamma}\eta \quad (\mbox{by Lemma~\ref{le:diff} and the fact $f(S_{i-1}\cup \{x\}), f(S_{i-1})\le \frac{1+\gamma}{1-\gamma}\eta$}) \\
&\ge &\frac{\eta - \hat{f}(S_{i-1})/(1-\gamma)}{k} - \frac{2\gamma(1+\gamma)}{1-\gamma}\eta \quad \mbox{(by~\eqref{eq:esterr2})}\\
&= &\frac{\hat{\eta}_{i-1}}{(1-\gamma)k} - \frac{2\gamma\eta}{1-\gamma}\left(\frac{1}{k} +\gamma +1\right).
\end{eqnarray*}
By the definition of $\hat{\eta}_i$, we have
\[ \hat{\eta}_{i-1} -\hat{\eta}_i  = \hat{f}(S_{i}) -\hat{f}(S_{i-1}) \ge \frac{\hat{\eta}_{i-1}}{(1-\gamma)k} - \frac{2\gamma\eta}{1-\gamma}\left(\frac{1}{k} +\gamma +1\right), \]
that is,
\[ \hat{\eta}_i\le \left(1-\frac{1}{(1-\gamma)k}\right) \hat{\eta}_{i-1} +  \frac{2\gamma\eta}{1-\gamma}\left(\frac{1}{k} +\gamma +1\right). \]
Since $\hat{f}(S_{l-1})< (1+\gamma)(\eta-\varepsilon)$, thus $\hat{\eta}_{l-1}> (1+\gamma)\varepsilon$. Let $a=1-\frac{1}{(1-\gamma)k}$ and $b=\frac{2\gamma\eta}{1-\gamma}\left(\frac{1}{k} +\gamma +1\right)$.
\begin{eqnarray*}
\hat{\eta}_{l-1} &\le & a\hat{\eta}_{l-2} + b \\
&\le & a^{l-1} (1+\gamma)\eta + \frac{1-a^{l-1}}{1-a}b \\
&\le & a^{l-1} (1+\gamma)\eta + \frac{1}{1-a}b.
\end{eqnarray*}
Thus,
\[ (1+\gamma)\varepsilon < \left(1-\frac{1}{(1-\gamma)k} \right)^{l-1}(1+\gamma)\eta + 2\gamma\eta((1+\gamma)k +1). \]
Since $\forall z, 1+z\le e^z$,
\[ (1+\gamma)\varepsilon < e^{-\frac{l-1}{(1-\gamma)k}}(1+\gamma)\eta + 2\gamma\eta((1+\gamma)k +1). \]
It means
\[ l < (1-\gamma)k \ln \left(\frac{\eta}{\varepsilon- 2\gamma\eta(k+\frac{1}{1+\gamma})}\right)  +1. \]
Since $l$ is an integer,
\[ l \le \left\lceil (1-\gamma)k \ln \left(\frac{\eta}{\varepsilon- 2\gamma\eta(k+\frac{1}{1+\gamma})}\right) \right\rceil. \]
Since $k\le |V|$,
\[ |S_l| \le \left\lceil (1-\gamma) \ln \left(\frac{\eta}{\varepsilon- 2\gamma\eta(|V|+\frac{1}{1+\gamma})}\right) \right\rceil |S^*|. \]

If $\hat{f}(S_l)\ge (1+\gamma)\eta$, then let $S=S_l$, we have done. Otherwise, $\hat{f}(S_l) < (1+\gamma)\eta$.

By (2), we know that $f(S_l)\le \frac{\hat{f}(S_l)}{1-\gamma} < \frac{1+\gamma}{1-\gamma}\eta$. By a similar analysis of Lemma~\ref{le:diff}, we know that there exists a node $x\in V\setminus S_l$ satisfying
\[f(S_l\cup \{x\})-f(S_l) \ge \frac{f(V)-\frac{1+\gamma}{1-\gamma}\eta}{|V|}. \]
We consider the marginal increment of $x$ on $\hat{f}(S_l)$.
\begin{eqnarray*}
&& \hat{f}(S_l\cup \{x\}) -\hat{f}(S_l) \\
&\ge & (1-\gamma)f(S_l\cup \{x\}) -(1+\gamma)f(S_l) \quad \mbox{(by~\eqref{eq:esterr1})}\\
&= & f(S_l\cup\{x\}) - f(S_l) -\gamma \left(f(S_l\cup \{x\}) +f(S_l) \right)  \\
&\ge & \frac{f(V)-\frac{1+\gamma}{1-\gamma}\eta}{|V|} -2\gamma f(V)\\
&= &  f(V)\left(\frac{1}{|V|} -2\gamma\right) -\frac{(1+\gamma)\eta}{(1-\gamma)|V|} \\
&= & (1+\gamma)\varepsilon.
\end{eqnarray*}
Thus, $\hat{f}(S_{l+1})\ge (1+\gamma)\eta$. Let $S=S_{l+1}$, we have
\[ |S| \le \left\lceil (1-\gamma)\ln\left( \frac{\eta}{ \frac{f(V)}{1+\gamma}\left(\frac{1}{|V|}-2\gamma\right) -\frac{\eta}{(1-\gamma)|V|} -2\gamma\eta\left(|V|+ \frac{1}{1+\gamma}\right) } \right)\right\rceil |S^*| +1. \]
Since $\gamma\le \frac{\varepsilon_0(f(V)-\eta)}{8|V|(f(V)+\eta|V|)}$, we have
\begin{eqnarray*}
&& \frac{\eta}{ \frac{f(V)}{1+\gamma}\left(\frac{1}{|V|}-2\gamma\right) -\frac{\eta}{(1-\gamma)|V|} -2\gamma\eta\left(|V|+ \frac{1}{1+\gamma}\right) } \\
&= & \frac{\eta |V|}{\frac{f(V)}{1+\gamma} - \frac{\eta}{1-\gamma} - 2\gamma \left(\frac{f(V) |V|}{1+\gamma} + \eta |V|^2 +\frac{\eta |V|}{ 1+\gamma}\right) } \\
&\le & \frac{\eta |V|}{f(V)-\eta - \gamma\left( f(V)+\eta + 2(1-\gamma)f(V)|V| +2(1-\gamma^2)\eta |V|^2 + 2(1-\gamma)\eta |V|\right) } \\
&\le & \frac{\eta |V|}{f(V)-\eta - \gamma\left( f(V)+\eta + 2f(V)|V| +2\eta |V|^2 + 2\eta |V|\right) } \\
&\le & \frac{\eta |V|}{f(V)-\eta - 4\gamma\left(f(V)|V| +\eta |V|^2\right) } \\
&\le & \frac{\eta |V|}{(1-\frac{\varepsilon_0}{2})(f(V)-\eta )} \quad \mbox{(by $\gamma\le \frac{\varepsilon_0(f(V)-\eta)}{8|V|(f(V)+\eta|V|)}$)}\\
&\le & \frac{(1+\varepsilon_0)\eta |V|}{f(V)-\eta} \quad \mbox{(by $\varepsilon_0\le 1$)} .
\end{eqnarray*}
By the fact $\varepsilon<\eta$, we know that $\left\lceil \ln \left(\frac{(1+\varepsilon_0)\eta |V|}{f(V)-\eta }\right)  \right\rceil \ge \left\lceil \ln  \frac{\eta}{\varepsilon}  \right\rceil > 0$. Thus,
\[ |S| \le \left\lceil \ln \left(\frac{(1+\varepsilon_0)\eta |V|}{f(V)-\eta }\right)  \right\rceil |S^*| +1. \]
The theorem holds.
\qedsymbol

Kempe et al. show that set function $E[\Inf(S)]$ for {\em expected influence
	coverage} is monotone and
	submodular under the IC model~\cite{kempe2003maximizing}.
Therefore, if our problem is to find a seed set of minimum size such that the expected
	influence coverage
	is at least a threshold value $\eta$,
	Theorem~\ref{thm:greedy} already provides the approximation guarantee of the greedy algorithm.
We call this problem the {\em seed minimization with expected coverage guarantee
	(SM-ECG)}, to differentiate with the problem concerned in this paper ---
	seed minimization with {\em probabilistic} coverage guarantee (SM-PCG).

For the SM-PCG problem, we want the influence coverage to be
	at least $\eta$ with a guaranteed probability $P$.
This seemingly minor change from SM-ECG actually alters the nature of the problem.
The SM-PCG corresponds to two variants of set functions, but neither of them is submodular.
In the first variant, we fix influence threshold $\eta$, and define $f_{\eta}: 2^{|V|}\rightarrow \mathbb{R}^+$ where $f_{\eta}(S) = \Pr(\Inf(S)\ge \eta)$.
In the second variant, we fix probability $P$, and define $g_P: 2^{|V|}\rightarrow \mathbb{R}^+$ where $g_P(S)=\max_{\eta':\Pr(\Inf(S)\geq \eta')\geq P} \eta'$.
Neither $f_\eta(\cdot)$ nor $g_P(\cdot)$ is submodular, as shown by the two examples below.
For $f_{\eta}$, see Figure \ref{nonsubmodular}, $G$ is a bipartite graph where all edges are associated with probability $1$, and $U$ contains all the nodes in the lower part. We fix $\eta=5$.
Let $S=\{a\}$ and $T=\{a,b\}$, then $f_{\eta}(S\cup \{u\})-f_{\eta}(S)=0$, since neither
	$S$ nor $S\cup \{u\}$ could reach $5$ nodes in $U$.
Similarly, $f_{\eta}(T)=0$.
However, $f_{\eta}(T\cup \{u\}) = 1$, since $5$ nodes are reached by $T\cup \{u\}$.
Therefore, $f_{\eta}(T\cup \{u\})-f_{\eta}(T) >f_{\eta}(S\cup \{u\})-f_{\eta}(S)$, and
	thus $f_\eta(\cdot)$ is not submodular.
For $g_P$, see Figure \ref{nonsubmodular2}, $G$ is a bipartite graph where all edges are associated with probability $0.5$, and $U=\{u\}$. We set $P=0.8$. Let $S=\{a\}$ and $T=\{a, b\}$, then $g_P(S\cup \{c\})-g_P(S)=0$ and $g_P(T\cup \{c\})-g_P(T)=1$. Since $g_P(S\cup \{c\})-g_P(S)< g_P(T\cup \{c\})-g_P(T)$, $g_P$ is not submodular.

\begin{figure}[t]
\begin{center}
\includegraphics[scale=0.5]{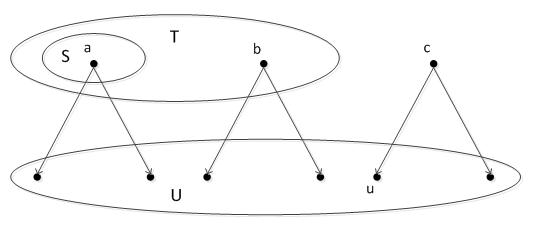}
\caption{Function $f_{\eta}$ is nonsubmodular}
\label{nonsubmodular}
\end{center}
\end{figure}
\begin{figure}[t]
\begin{center}
\includegraphics[scale=0.5]{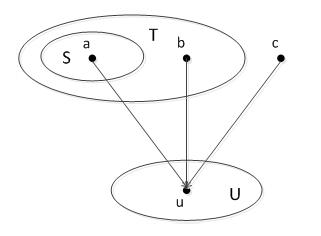}
\caption{Function $g_P$ is nonsubmodular}
\label{nonsubmodular2}
\end{center}
\end{figure}

Since neither $f_\eta(\cdot)$ nor $g_P(\cdot)$ is submodular, we cannot apply
	Theorem~\ref{thm:greedy} on $f_\eta(\cdot)$ or $g_P(\cdot)$ to solve the SM-PCG problem.
In this paper, we address this non-submodular optimization problem by relating
	it to the SM-ECG problem through a concentration assumption on random variable
	$\Inf(S)$ for certain seed sets $S$.	
We will use the following concentration inequalities in the later sections.

\begin{fact}[Chebyshev's inequality] \label{fact:chebyshev}
Let $X$ be a random variable with finite expectation $E[X]$ and finite variance $\Var(X)$. Then for any real value $t>0$,
$$\Pr(|X-E[X]|\geq t)\leq \frac{\Var(X)}{t^2}.$$
\end{fact}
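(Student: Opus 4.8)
The plan is to derive Chebyshev's inequality as a one-line consequence of Markov's inequality applied to the auxiliary nonnegative random variable $Y = (X - E[X])^2$. So the first thing I would do is record Markov's inequality itself: for any nonnegative random variable $Y$ and any $a > 0$, we have $\Pr(Y \geq a) \leq E[Y]/a$. This is proved by the pointwise domination $Y \geq a\cdot\mathbf{1}[Y\geq a]$ (valid because $Y\geq 0$ everywhere and $Y \geq a$ on the event in question), followed by taking expectations and using monotonicity and linearity of expectation: $E[Y]\geq a\cdot E[\mathbf{1}[Y\geq a]] = a\cdot\Pr(Y\geq a)$.

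Next I would instantiate this with $Y = (X-E[X])^2$. The hypotheses that $E[X]$ and $\Var(X)$ are finite guarantee that $Y$ is a well-defined nonnegative random variable with $E[Y] = \Var(X) < \infty$. The key elementary observation is that, since $t > 0$ and $z\mapsto z^2$ is increasing on $[0,\infty)$, the events $\{\,|X-E[X]|\geq t\,\}$ and $\{\,Y\geq t^2\,\}$ coincide. Applying Markov's inequality with $a = t^2$ then yields $\Pr(|X-E[X]|\geq t) = \Pr(Y\geq t^2) \leq E[Y]/t^2 = \Var(X)/t^2$, which is exactly the claimed bound.

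For a statement this classical there is essentially no obstacle; the only place that calls for a moment's care is making the Markov step self-contained, namely justifying the pointwise inequality $Y\geq a\,\mathbf{1}[Y\geq a]$ and the passage to expectations under the finiteness hypotheses. Everything after that is a direct substitution, so I would expect the write-up to be just a few lines.
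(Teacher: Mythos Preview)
Your proof is correct and is the standard derivation of Chebyshev's inequality from Markov's inequality. Note, however, that the paper does not actually prove this statement: it is recorded as a \emph{Fact} (a well-known classical inequality) and used as a black-box tool, so there is no proof in the paper to compare against.
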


\begin{fact}[Hoeffding's inequality] \label{fact:hoeffding}
Let $X_1,\ldots,X_n$ be independent random variables.
Assume that the $X_i$ are almost surely bounded, that is, assume for $1\leq i\leq n$ that
$\Pr(X_i\in[a_i,b_i])=1$.
We define the sum of these variables
$X=X_1+\cdots+X_n.$
Then, for any constant $t>0$,
$$\Pr(X-E[X]\geq t)\leq \exp\left(-\frac{2t^2}{\sum_{i=1}^n (b_i-a_i)^2}\right),$$
$$\Pr(|X-E[X]|\geq t)\leq 2 \exp\left(-\frac{2t^2}{\sum_{i=1}^n (b_i-a_i)^2}\right).$$
\end{fact}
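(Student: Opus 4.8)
The plan is to prove Hoeffding's inequality by the classical exponential-moment (Chernoff) method, reducing the whole statement to a moment-generating-function bound for a single bounded random variable. First I would fix a parameter $s>0$ and apply Markov's inequality to the nonnegative random variable $e^{s(X-E[X])}$, obtaining
\[
\Pr(X-E[X]\ge t)=\Pr\!\left(e^{s(X-E[X])}\ge e^{st}\right)\le e^{-st}\,E\!\left[e^{s(X-E[X])}\right].
\]
Because $X=X_1+\cdots+X_n$ with the $X_i$ independent, the expectation on the right factorizes as $\prod_{i=1}^n E\!\left[e^{s(X_i-E[X_i])}\right]$, so it suffices to bound each factor.

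The core ingredient is \emph{Hoeffding's lemma}: if $Y$ satisfies $E[Y]=0$ and $\Pr(Y\in[a,b])=1$, then $E[e^{sY}]\le\exp\!\left(s^2(b-a)^2/8\right)$. I would prove it using convexity of $y\mapsto e^{sy}$ on $[a,b]$: writing $y=\frac{b-y}{b-a}\,a+\frac{y-a}{b-a}\,b$ gives $e^{sy}\le\frac{b-y}{b-a}e^{sa}+\frac{y-a}{b-a}e^{sb}$, and taking expectations together with $E[Y]=0$ yields $E[e^{sY}]\le(1-p)e^{sa}+p\,e^{sb}$ with $p=-a/(b-a)\in[0,1]$ (note $a\le 0\le b$). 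Setting $h=s(b-a)$, the right side equals $e^{\varphi(h)}$ with $\varphi(h)=-ph+\ln(1-p+p\,e^{h})$; one checks $\varphi(0)=\varphi'(0)=0$ and $\varphi''(h)=u(1-u)\le 1/4$ where $u=p\,e^{h}/(1-p+p\,e^{h})\in[0,1]$, so Taylor's theorem with Lagrange remainder gives $\varphi(h)\le h^2/8$.

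Finally I would assemble the pieces: applying the lemma to each $Y_i=X_i-E[X_i]\in[a_i-E[X_i],\,b_i-E[X_i]]$ (whose range length is still $b_i-a_i$) gives
\[
\Pr(X-E[X]\ge t)\le\exp\!\left(-st+\frac{s^2}{8}\sum_{i=1}^n(b_i-a_i)^2\right),
\]
and minimizing the exponent over $s>0$ at $s=4t/\sum_i(b_i-a_i)^2$ produces the stated one-sided bound. The two-sided bound follows by applying the one-sided bound also to $-X=\sum_i(-X_i)$, where $-X_i\in[-b_i,-a_i]$ has the same range length, and taking a union bound over the two tail events. The only genuinely delicate point is the estimate $\varphi''(h)\le 1/4$ in Hoeffding's lemma; the Chernoff reduction, the independence factorization, and the optimization over $s$ are all routine. (Since this is a textbook inequality, in the paper it is simply cited rather than reproved.)
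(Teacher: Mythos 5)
Your proof is correct: the Chernoff reduction, the convexity-based Hoeffding lemma with the $\varphi''(h)=u(1-u)\le 1/4$ estimate, the optimization at $s=4t/\sum_i(b_i-a_i)^2$, and the union bound for the two-sided version are all standard and sound. The paper states this as a Fact and cites it without proof, so there is nothing to compare against; your argument is exactly the textbook derivation the paper implicitly relies on.
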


\section{Influence Coverage Computation}
\label{sec:infcomp}

Before working on the SM-PCG problem directly, we first address the related computation issue
	when a seed set $S$ is given.
As we mentioned in last section, there are two variants in influence coverage computation.
The first variant is that, given a seed set $S$ and a coverage threshold $\eta$, we need to compute
	the probability $f_\eta(S)$ that $S$ can activate at least $\eta$ nodes in $U$.
Note that we have $E[\Inf(S)] = \sum_{i=1}^{n-1} (f_{i}(S)-f_{i+1}(S))\cdot i + f_n(S)\cdot n$.
Thus the exact computation of $f_\eta(S)$ must be \#P-hard in the IC model since computing
	expected influence coverage $E[\Inf(S)]$ of seed set $S$ has shown to be \#P-hard in the IC
	model~\cite{chen2010scalable}.
However, we can use Monte Carlo simulations to compute an accurate estimate of the probability.
Algorithm \ref{computeprobability} shows the procedure $\MCCP[R]$ for this task, which
	simulate the diffusion from seed set $S$ for $R$ runs and use the fraction of runs in which
	the number of active nodes in $U$ reaches $\eta$ as the estimate of the probability.

\begin{algorithm}[thb]
\renewcommand{\algorithmicrequire}{\textbf{Input:}}
\renewcommand\algorithmicensure {\textbf{Output:}}
    \caption{Function $\MCCP[R]$: $R$ is a tuning parameter controlling the accuracy of the estimate} \label{alg:compProb}
    \begin{algorithmic}[1]
    \REQUIRE ~~ $G=(V,E), \{p_{u,v}\}_{(u,v)\in E},  U, S, \eta$\\
    \ENSURE ~~  estimate of $P=\Pr(\Inf(S)\geq \eta)$\\
\STATE $t=0$
\FOR {$i=1$ to $R$}
    \STATE simulate IC diffusion with seed set $S$
    \STATE $N_i =$ number of final active nodes in $U$
    \IF {$N_i\geq \eta$}
        \STATE \ $t=t+1$
    \ENDIF
\ENDFOR
\RETURN $t/R$

\end{algorithmic}
\label{computeprobability}
\end{algorithm}

	
The following lemma shows the relationship between the number of simulations $R$ and
	the accuracy of the estimate.
\begin{lemma}
Let $\hat{P}$ be the estimate of true value $P = $ $\Pr(\Inf(S)$ $ \geq \eta)$ output by $\MCCP[R]$
	in Algorithm~\ref{alg:compProb}.
To guarantee an error of at most $\varepsilon$, i.e. $|\hat{P}-P| \le \varepsilon$ \footnote{This lemma holds when $\varepsilon > P$. However, we usually set $\varepsilon$ smaller than $P$ to make the estimate more reasonable.}, with
	probability at least $1- 1/n^\delta$, it is sufficient to set $R \ge \ln(2n^\delta)/(2\varepsilon^2)$.
\label{lem:compProb}
\end{lemma}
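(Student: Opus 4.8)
The plan is to recognize $\hat P$ as the empirical mean of $R$ independent Bernoulli trials and invoke Hoeffding's inequality (Fact~\ref{fact:hoeffding}). For each run $i \in \{1,\dots,R\}$ of $\MCCP[R]$, let $X_i$ be the indicator random variable that equals $1$ if $N_i \ge \eta$ and $0$ otherwise. Since each iteration of the loop performs an independent IC diffusion from the same fixed seed set $S$ (drawing fresh edge randomness), the variables $X_1,\dots,X_R$ are independent and identically distributed, each with $\Pr(X_i = 1) = \Pr(\Inf(S)\ge\eta) = P$, so $E[X_i]=P$. The value returned by the algorithm is $\hat P = t/R = \frac{1}{R}\sum_{i=1}^R X_i$, hence $E[\hat P]=P$.

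Next I would apply the two-sided Hoeffding bound to the sum $X=\sum_{i=1}^R X_i$. Each $X_i$ lies almost surely in $[0,1]$, so $\sum_{i=1}^R (b_i-a_i)^2 = R$, and taking the deviation parameter to be $R\varepsilon$ gives
\[
\Pr\big(|\hat P - P|\ge\varepsilon\big) = \Pr\big(|X-E[X]|\ge R\varepsilon\big) \le 2\exp\!\left(-\frac{2(R\varepsilon)^2}{R}\right) = 2\exp(-2R\varepsilon^2).
\]
Finally I would solve for $R$: requiring $2\exp(-2R\varepsilon^2)\le n^{-\delta}$ is equivalent, after taking logarithms, to $2R\varepsilon^2 \ge \ln(2n^\delta)$, i.e.\ $R\ge \ln(2n^\delta)/(2\varepsilon^2)$. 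Under this choice the probability that the estimate is off by more than $\varepsilon$ is at most $1/n^\delta$, which is precisely the claim.

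There is no genuine obstacle here; the proof is a textbook concentration argument. The only points that deserve an explicit word are (i) the outcomes of distinct simulation runs are truly independent, which is what justifies treating $X_1,\dots,X_R$ as i.i.d., and (ii) each $X_i$ is bounded in $[0,1]$, which is what controls the range terms $(b_i-a_i)^2$ in Hoeffding's bound. The caveat in the footnote that $\varepsilon > P$ is permitted requires no separate treatment, since the displayed bound holds for every $\varepsilon > 0$.
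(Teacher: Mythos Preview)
Your proposal is correct and follows essentially the same approach as the paper: define indicator variables $X_i$ for each simulation run, apply the two-sided Hoeffding bound to their sum with deviation $R\varepsilon$, and invert the resulting inequality $2\exp(-2R\varepsilon^2)\le n^{-\delta}$ to obtain the stated condition on $R$.
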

\begin{proof}
Let $X_i$ be a boolean random variable, with $X_i=1$ meaning the influence coverage of the $i$-th
	simulation run in the algorithm $\MCCP[R]$ is at least $\eta$, and $0$ otherwise.
Let $X = \sum_{i=1}^R X_i$.
Then we have $X = \hat{P}\cdot R$, and $E[X] = P\cdot R$.
Thus we can apply Hoeffding's Inequality as given in Fact~\ref{fact:hoeffding} and obtain
\begin{align*}
\Pr(|\hat{P}-P| \ge \varepsilon) & = \Pr(|X - E[X]| \ge R\varepsilon) \\
& \le 2 \exp(-2R\varepsilon^2) \le \frac{1}{n^\delta},
\end{align*}
where the last inequality uses condition $R \ge \ln(2n^\delta)/(2\varepsilon^2)$.
\qedsymbol \end{proof}
The second variant is that, given a seed set $S$ and a specified probability $P$,
	we need to compute the maximum influence coverage $\eta$ of $S$ with at least probability $P$,
	that is, $\eta = \max_{\eta': \Pr(\Inf(S)\ge \eta') \ge P} \eta'$.
Unlike the first variant, we show below that this problem is \#P-hard to approximate to
	any non-trivial ratio.
We say that an algorithm approximates a true value $v$ for a computing problem with ratio $\alpha>1$ if the output of algorithm
	$\hat{v}$ satisfies $v/\alpha \le \hat{v}\le \alpha v$.
Note that if the range of value $v$ is from $1$ to $n$, then using $\hat{v}=n^{1/2}$ gives a trivial
	approximation ratio of $\alpha = n^{1/2}$.
\begin{theorem}
For any fixed probability $P\in (0,1)$, the problem of computing
	$\eta = \max_{\eta': \Pr(\Inf(S)\ge \eta') \ge P} \eta'$ given
	a directed social graph $G=(V, E)$, influence probabilities
	$\{p_{u,v}\,|\, (u,v)\in E\}$, target set $U=V$, and a seed set $S$ is \#P-hard to approximate
	within a ratio of $|V|^{1/2-\varepsilon}$ for any $\varepsilon>0$.
\end{theorem}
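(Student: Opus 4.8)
The plan is to give a polynomial-time Turing reduction from a \#P-complete problem to the problem of $|V|^{1/2-\varepsilon}$-approximating $\eta$. As the \#P-hard starting point I would use directed $s$--$t$ network reliability counting: given a directed graph $H$ in which every arc is independently present with probability $1/2$, nodes $s,t$, and an integer $m$, decide whether the number of subgraphs in which $t$ is reachable from $s$ is at least $m$. A decision oracle for this question recovers the exact connecting-subgraph count by binary search over $m$, and that count is \#P-complete; hence it suffices to answer one such threshold question per oracle call. Throughout I use the standard live-arc view of the IC model: $\Inf(S)$ is the number of nodes reachable from $S$ once each arc $(u,v)$ is independently retained with probability $p_{u,v}$.

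First I would turn a reliability instance $(H,s,t,m)$ into an IC gadget $\Gamma_m$ with a designated node $t^{*}$ such that $\Pr[\,s\text{ reaches }t^{*}\,]\ge P$ if and only if the connecting-subgraph count of $H$ is $\ge m$. Writing $p:=\Pr_H[s\to t]=(\text{count})/2^{|E(H)|}$, the count is $\ge m$ iff $p\ge m\,2^{-|E(H)|}$, and since this number need not equal the fixed $P$ I compose $H$ with a small calibration gadget that applies a monotone map to $p$: an extra arc of probability $r$ in series realizes $p\mapsto rp$ (shifting the critical threshold up to $P/r$), while a fresh node $t^{*}$ with an incoming arc of probability $1$ from $t$ together with an arc $s\to t^{*}$ of probability $q$ realizes $p\mapsto q+(1-q)p$ (shifting it down to $(P-q)/(1-q)$). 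In either case the required $r$ or $q$ is a rational with $O(|E(H)|)$ bits, obtained by solving $P/r=m2^{-|E(H)|}$ (used when $m>P\,2^{|E(H)|}$) or $(P-q)/(1-q)=m2^{-|E(H)|}$ (otherwise). This yields $\Gamma_m$ in polynomial time with $n_0:=|V(H)|+O(1)=\mathrm{poly}$ nodes; one may equally extract this black box from the gadget used to prove \#P-hardness of $E[\Inf(S)]$ in the IC model \cite{chen2010scalable}.

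Next I would amplify the gap. Fix the constant $c:=\lceil 1/\varepsilon\rceil+2$, attach to $t^{*}$ a fan-out of $K:=n_0^{c}$ fresh nodes $w_1,\dots,w_K$ each having a single incoming arc from $t^{*}$ of probability $1$, call the result $G_m$ with target set $U=V(G_m)$ and seed $S=\{s\}$, and set $n=|V(G_m)|=n_0+K$. Since every $w_i$ is reachable from $s$ only through $t^{*}$, there is a clean dichotomy: if the count is $\ge m$ then $\Pr[s\to t^{*}]\ge P$ and each such event forces $\Inf(\{s\})\ge K+2$, so $\eta=g_P(\{s\})\ge K\ge n/2$; if the count is $<m$ then $\Pr[\Inf(\{s\})\ge n_0+1]\le\Pr[s\to t^{*}]<P$, so $\eta\le n_0\le n^{1/c}$. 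Hence a $n^{1/2-\varepsilon}$-approximation $\hat\eta$ satisfies $\hat\eta\le n^{1/2-\varepsilon+1/c}$ in the second case and $\hat\eta\ge n^{1/2+\varepsilon}/2$ in the first; because $c>1/(2\varepsilon)$ the exponent $1/2-\varepsilon+1/c$ is strictly below $1/2<1/2+\varepsilon$, so for all sufficiently large inputs (the finitely many smaller ones handled by brute force) comparing $\hat\eta$ with $n^{1/2}$ decides whether the count is $\ge m$. Iterating over the $O(|E(H)|)$ thresholds of a binary search recovers the exact count, proving \#P-hardness.

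The main obstacle is the first step: realizing a gadget whose prescribed $s$--$t^{*}$ reachability probability crosses the \emph{fixed} threshold $P$ exactly at the count value $m$, using only polynomially many nodes and polynomial-bit rational probabilities. The "reachability probability equals a \#P-hard count" part is essentially the content of network-reliability hardness; the genuinely delicate ingredient is the calibration argument that a single monotone series/parallel composition can reposition the critical probability to land precisely at $m2^{-|E(H)|}$ for an arbitrary (rational) $P$, so that one oracle query per threshold suffices. Everything afterwards — the fan-out amplification, fixing $c$, and the binary search — is routine.
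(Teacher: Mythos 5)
Your proposal is correct and follows essentially the same route as the paper: a Turing reduction from \#P-complete $s$--$t$ connectivity counting via its threshold version, the same series/parallel calibration of edge probabilities so the fixed $P$ sits exactly at the count threshold (the paper's $p_{t,u}=P/Q$, $p_{s,u}=0$ versus $p_{t,u}=1$, $p_{s,u}=1-(1-P)/(1-Q)$ cases), a polynomial fan-out of probability-$1$ edges from $t^*$ to create the gap, and binary search over the threshold to recover the exact count. The calibration step you flag as delicate is handled in your write-up exactly as in the paper, so there is no gap.
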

Note that we treat $P$ as a fixed parameter of the problem rather than as part of the input to
	the computation problem, which makes the result stronger.
\begin{proof}
We prove the theorem by a reduction from the \#P-complete
	counting problem of $s$-$t$ connectivity in a directed graph \cite{valiant1979complexity}. Given a directed graph, two specified nodes $s$ and $t$, the objective of this problem is to find the total number of subgraphs with the same set of nodes but a subset of
	edges in which there exists at least one path from $s$ to $t$.
This problem is equivalent to the following problem:
given a directed graph and two different nodes $s$ and $t$, each edge in that graph has an independent probability of
	$1/2$ to appear or disappear, and the objective is to compute the probability that $s$
	reaches $t$.
	
We first reduce the above $s$-$t$ connectivity problem to
	its decision version, that is, given a graph $G=(V,E)$, two
	nodes $s$ and $t$, a probability $Q$, and each edge having an independent probability of $1/2$
	to appear or disappear, ask whether the probability of $s$ reaching $t$ is at least $Q$ or not.
If this decision version is solvable, we can do a binary search to find the actual probability of $s$
	reaching $t$.
Note that since each edge has probability $1/2$ to appear or disappear, the probability of
	$s$ reaching $t$ is a multiple of $1/2^{|E|}$, which means we can find its exact value using
	$|E|$ queries to the decision version of the problem.
Therefore, the decision version of the $s$-$t$ connectivity problem is \#P-hard.
	
We now reduce the decision version of the $s$-$t$ connectivity problem to the problem of computing
	$\eta = \max_{\eta': \Pr(\Inf(S)\ge \eta') \ge P} \eta'$.
Consider any instance of the decision version of the $s$-$t$ connectivity problem with
	graph $G=(V,E)$ and probability $Q$.
If $Q=0$ or $1$, the decision problem is trivial, and thus we assume $0< Q<1$.
Let $n = |V|$.
We construct a new graph $G'=(V',E')$ from $G$ in the following way, as shown in Figure~\ref{s-t}.
We add nodes $u$ together with  $N$ additional nodes to $G$,  where $N=n^c$ and $c$ is a constant to be determined shortly.
We also add directed edges $(s,u)$ and $(t,u)$,  and directed edges from $u$ to all the
	$n^c$ additional nodes.
The influence probabilities of all original edges in $G$ are $1/2$, while
	the influence probabilities of all new edges except $(s,u)$ and $(t,u)$ are $1$.
If $Q\ge P$, $p_{t,u} = P/Q$ and $p_{s,u}=0$; if $Q < P$,
	$p_{t,u} = 1$ and $p_{s,u}=1 - (1-P)/(1-Q)$.\footnote{We require that algorithm $Alg$
	could handle any rational number input on influence
	probabilities. Then the constructed $p_{t,u}$ and $p_{s,u}$ can be encoded as a rational
	number with lengths polynomial to the lengths of numbers $P$ and $Q$.}
Let $p_0$ be the probability of $s$ reaching $t$ in the original graph $G$, which is the same
	as the probability of $s$ reaching $t$ in $G'$.
Let $p_1$ be the probability of $s$ reaching $u$ in $G'$.
It is easy to check that with the above setup, $p_0\ge Q$ if and only if $p_1 \ge P$.

\begin{figure}[h]
\centering
\includegraphics[scale=0.5]{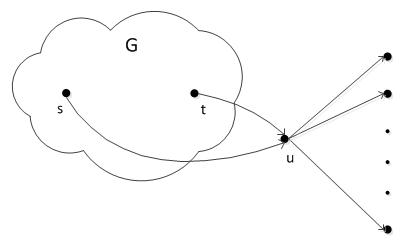}
\caption{$G'$: influence probabilities of all original edges in $G$ are 1/2, influence probabilities of all new added edges except $(s,u)$ and $(t,u)$ are 1, and influence probabilities of $(s,u)$ and $(t,u)$ depend on the value of $Q$ and $P$}
\label{s-t}
\end{figure}

Let $S=\{s\}$.
Assume that there exists an approximation algorithm $Alg$ that outputs $\hat{\eta}$, which
	approximates the true value
	$\eta = \max_{\eta': \Pr(\Inf(S)\ge \eta') \ge P} \eta'$
	with an approximation ratio $|V|^{1/2-\varepsilon}$ for some $\varepsilon>0$,
	where $V$ is the node set of the input graph to $Alg$.
We choose a sufficiently large constant $c$ such that $N = n^c > n |V'|^{1-2\varepsilon}$ where
	$|V'| = N + n +1$ is the total number of nodes in graph $G'$.
Then, for constructed $G'$ with the chosen constant $c$,
	we can use $Alg$ to distinguish whether $p_1 \ge P$ as follows.
If $p_1 \ge P$, then $s$ reaches $u$ and thus all $N$ additional nodes with probability at least $P$,
	which means the true value $\eta$ is at least $N$.
Thus we have $\hat{\eta} \ge N / |V'|^{1/2-\varepsilon}$.
If $p_1 < P$, the true value $\eta$ is at most $n$, and thus
	$\hat{\eta} \le n |V'|^{1/2-\varepsilon}$.
By our choice of $c$, we know that $N / |V'|^{1/2-\varepsilon} > n |V'|^{1/2-\varepsilon}$.
Therefore, we can use the condition $\hat{\eta} \ge N / |V'|^{1/2-\varepsilon}$ to determine if
	$p_1 \ge P$.
Since our construction guarantees that $p_1 \ge P$ if and only if $p_0 \ge Q$, we can use
	the condition $\hat{\eta} \ge N / |V'|^{1/2-\varepsilon}$ to answer the decision question
	of the $s$-$t$ connectivity problem.
This implies that our problem of computing
	$\eta = \max_{\eta': \Pr(\Inf(S)\ge \eta') \ge P} \eta'$
	within the ratio of $|V|^{1/2-\varepsilon}$ for any $\varepsilon > 0$ is \#P-hard.
\qedsymbol \end{proof}

\section{Approximation Algorithm}
\label{sec:algo}

In this section, we overcome the nonsubmodularity nature of the SM-PCG problem
	discussed in Section~\ref{sec:tools} by connecting it with the submodular problem SM-ECG.
We first provide the general algorithm, and then show that the algorithm returns a seed set
	that approximates the optimal solution with both a multiplicative ratio and an additive
	error.
The multiplicative ratio is due to the connection with the SM-ECG problem.
For the additive error term, we show that it would be nontrivial when certain
	concentration assumption on influence coverages holds.

\begin{algorithm}[thb]
\renewcommand{\algorithmicrequire}{\textbf{Input:}}
\renewcommand\algorithmicensure {\textbf{Output:}}
    \caption{$\MinSeed[\varepsilon]$: $\varepsilon \in [0, (1-P)/2)$ is a control parameter}
    \begin{algorithmic}[1]
    \REQUIRE $G=(V,E), \{p_{u,v}\}_{(u,v)\in E}, U, \eta,P$\\
    \ENSURE seed set $S$, which is an approximation to $S^*=\argmin_{S': \Pr(\Inf(S')\geq \eta)\geq P}\{|S'|\}$\\
\STATE \label{line:greedyb}
	$S_0 = \emptyset$\\
\FOR {$i=1$ to $n$}
    \STATE select $u=\argmax_v \{\hat{E}[\Inf(S_{i-1}\cup \{v\})]-\hat{E}[\Inf(S_{i-1})]\}$ \\
    \STATE $S_i=S_{i-1}\cup \{u\}$ \\
    \STATE \label{line:getprob}
    	$\prob = \ComputeProbability(G, \{p_{u,v}\}_{(u,v)\in E}, U, \eta, S_{i})$
    \IF{$\prob \ge P+\varepsilon$}
    	\RETURN $S_i$
    \ENDIF
\ENDFOR \label{line:greedye}
%
\end{algorithmic}
\label{general-alg}
\end{algorithm}

Algorithm \ref{general-alg} illustrates algorithm $\MinSeed$ for solving the SM-PCG problem.
The algorithm builds up a sequence of subsets $S_0, S_1,S_2, \ldots$, where for any $i \ge 1$,
	$S_i$ contains one more element
	$u$ than $S_{i-1}$ such that $u$ provides the largest marginal increase in {\em expected} influence
	coverage to seed set $S_{i-1}$.
The way of constructing seed sets $S_i$'s is in line with the greedy approach as discussed
	in Section~\ref{sec:tools}.
In our algorithm, $\hat{E}[\Inf(\cdot)]$ is a $\gamma$-multiplicative error estimation of exact expected influence $E[\Inf(\cdot)]$.
Every time a new set $S_i$ is constructed, we compute the probability that
	the influence coverage of $S_i$ is at least $\eta$ (line~\ref{line:getprob}).
The $\ComputeProbability$ in line~\ref{line:getprob} is a generic function computing
	$\Pr(\Inf(S_i)\ge \eta)$, which could be
	$\MCCP[R]$ in Algorithm~\ref{alg:compProb} for general graphs, or $\BiCP$ in Algorithm~\ref{alg:BiCP} for one-way bipartite graphs, or
	some other functions for this purpose.
If the probability computed is at least $P+\varepsilon$, where $\varepsilon\in [0, (1-P)/2)$ is a parameter of
	the algorithm, we stop and return $S_i$ as the seed set found by the algorithm.
Parameter $\varepsilon$ is related to the accuracy of the function $\ComputeProbability$.
If $\ComputeProbability$ accurately computes $\Pr(\Inf(S)\ge \eta)$ (e.g. $\BiCP$ for
	one-way bipartite graphs), we set $\varepsilon$ to $0$.
If $\ComputeProbability$ only provides an estimate (e.g. $\MCCP[R]$ for general graphs),
	we set $\varepsilon$ to be an appropriate value related to the error term of the estimate given
	by the function.
We will discuss parameter $\varepsilon$ with more technical details later.

%

Let $S^*$ be the optimal seed set for the SM-PCG problem, that is,
	$S^* = \argmin_{S:\Pr(\Inf(S)\geq \eta )\geq P} |S|$.
Let $n=|V|$ and $m = |U|$.
Let ${\cal S} = \{S_1,S_2,\ldots, S_n=V\}$ be the sequence of greedy seed sets computed by algorithm
	$\MinSeed[\varepsilon]$ (considering the entire sequence even when $\MinSeed[\varepsilon]$
	actually stops).
Let $S_a$ be the output of $\MinSeed[\varepsilon]$ and $a$ is its index in
	sequence $\cal S$, and thus $S_{a-1}$ is the set in $\mathcal{S}$ just before $S_a$.
	
We define $c=\max\{\eta-E[\Inf(S^*)],0\}$ and $c'=\max\{E[\Inf(S_{a-1})]-\eta,0\}$.
Intuitively, we know that $\Pr(\Inf(S^*)\geq \eta )\geq P$, and $c$ indicates how much
	$E[\Inf(S^*)]$ could be smaller than $\eta$.
If $\Inf(S^*)$ concentrates well, $c$ should be small.
Similarly, we also know that $\Pr(\Inf(S_{a-1}) \ge \eta) < P+\varepsilon$, since $S_a$ is the
	first set satisfying $\Pr(\Inf(S_{a}) \ge \eta) \ge P+\varepsilon$.
Thus, $c'$ indicates how much $E[\Inf(S_{a-1})]$ could be larger than $\eta$, and if
	$\Inf(S_{a-1})$ concentrates well, $c'$ should be small.

The following theorem shows that the output $S_a$ of $\MinSeed[\varepsilon]$ approximates
	the optimal solution $S^*$ with $c$ and $c'$ included in the additive error term.

\begin{theorem} \label{thm:generalapprox}
For any $0\le \varepsilon_0\le 1$ and any $0\le \gamma\le \frac{\varepsilon_0(m-(\eta+c'))^2}{8mn(m+\eta n)}$. If $\hat{E}[\Inf(\cdot)]$ is a $\gamma$-multiplicative error estimation of $E[\Inf(\cdot)]$ for any subset of nodes, the size of the output by algorithm $\MinSeed[\varepsilon]$ approximates the size of the optimal
	solution in the following form:
\begin{equation}
|S_a|\le \left\lceil \ln \left(\frac{(1+\varepsilon_0)\eta n}{m-\eta}\right) \right\rceil |S^*| + \frac{(c+c')n}{m-(\eta+c')} +3 +\varepsilon_0.
\label{eq:approx}
\end{equation}
\label{thm:approxratio}
\end{theorem}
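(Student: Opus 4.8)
The plan is to relate the run of $\MinSeed[\varepsilon]$ to the greedy analysis for the \emph{submodular} problem SM-ECG applied to $f(\cdot)=E[\Inf(\cdot)]$ (which is nonnegative, monotone and submodular in the IC model, with $f(V)=m$), exploiting that $c$ and $c'$ are, by their definitions alone, one-sided slacks of the true expected coverage around $\eta$: $f(S^*)\ge \eta-c$ and $f(S_{a-1})\le \eta+c'$. Because the greedy sets form a chain $S_0\subset S_1\subset\cdots$, the true coverage $f(S_i)$ is nondecreasing in $i$, so every $S_i$ with $i<a$ satisfies $f(S_i)\le\eta+c'$. Hence it suffices to upper bound the first index $q$ with $f(S_q)\ge\eta+c'$, since then $|S_a|=a\le q+1$. (One first notes $\eta+c'<m$: $E[\Inf(S_{a-1})]=m$ would force $\Pr(\Inf(S_{a-1})\ge\eta)=1\ge P+\varepsilon$ since $\varepsilon<(1-P)/2$, contradicting that the algorithm did not return $S_{a-1}$; and $q\le n$ exists as $f(S_n)=f(V)=m>\eta+c'$.)

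I would bound $q$ by splitting the greedy run into two phases. In \emph{Phase~1}, from $S_0=\emptyset$ to the first index $p$ with $\hat{E}[\Inf(S_p)]\ge(1+\gamma)(\eta-c)$ (so $f(S_p)\ge\eta-c$ by \eqref{eq:esterr2}), I would replay the proof of Theorem~\ref{thm:greedy} nearly verbatim, using the threshold $\eta-c$ in place of $\eta$ and $S^*$ itself as the comparison set; the latter substitution is legitimate because Lemma~\ref{le:diff} and the rest of that proof use only $|S^*|=k$ and $f(S^*)\ge(\text{threshold})$, both of which hold. Since $(m-(\eta+c'))^2/m\le m-(\eta+c')\le m-(\eta-c)$ and $m+\eta n\ge m+(\eta-c)n$, the hypothesis $\gamma\le\frac{\varepsilon_0(m-(\eta+c'))^2}{8mn(m+\eta n)}$ implies the bound on $\gamma$ required by Theorem~\ref{thm:greedy} at threshold $\eta-c$, so the same computation gives $p\le\big\lceil\ln\big(\frac{(1+\varepsilon_0)(\eta-c)n}{m-(\eta-c)}\big)\big\rceil|S^*|+1$; as $x\mapsto x/(m-x)$ is increasing and $\eta-c\le\eta$, this is at most $\big\lceil\ln\big(\frac{(1+\varepsilon_0)\eta n}{m-\eta}\big)\big\rceil|S^*|+1$, the multiplicative term of \eqref{eq:approx}.

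In \emph{Phase~2}, from $S_p$ until the first $q$ with $f(S_q)\ge\eta+c'$, I would use the ``$V$-comparison'' variant of Lemma~\ref{le:diff} (its proof with $V$, $n$ in place of $S^*$, $k$): for every $S$ with $f(S)<m$ some node has marginal $f$-gain at least $\frac{m-f(S)}{n}\ge\frac{m-(\eta+c')}{n}$ throughout this phase. Since the greedy selects the best node for $\hat{E}[\Inf(\cdot)]$ rather than for $E[\Inf(\cdot)]$, each step's true gain is only guaranteed to be this quantity minus an $O(\gamma(\eta+c'))$ estimation error; the extra factor $m-(\eta+c')$ (and the extra $m$ in the denominator) in the hypothesis on $\gamma$, compared with Theorem~\ref{thm:greedy}, is exactly what makes that error a small fraction of the guaranteed gain. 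Summing the gains of the $q-1-p$ interior steps — each of which produces a set with $f(S)<\eta+c'$, so their total is less than $(\eta+c')-(\eta-c)=c+c'$ — then bounds Phase~2 by $\frac{(c+c')n}{m-(\eta+c')}$ up to $O(1)$ (with an $\varepsilon_0$-term from the distortion). Adding the two phase lengths and using $|S_a|\le q+1$ yields \eqref{eq:approx}.

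The main obstacle is precisely this Phase~2 error accounting. The phase can be long — up to $\Theta\big(\frac{mn}{m-(\eta+c')}\big)$ steps — so a per-step multiplicative distortion of $1+\Theta(\varepsilon_0)$ would threaten to inflate the additive term by a non-constant amount; the delicate point is to check that, under the stated quadratic-in-$(m-(\eta+c'))$ bound on $\gamma$, the cumulative distortion collapses to the advertised $\varepsilon_0$ plus a genuine constant, and to track the leftover constants (the two ceilings and a possible extra seed at the end of each phase) so they sum to at most $3+\varepsilon_0$. Everything else is a transcription of estimates already appearing in the proofs of Lemma~\ref{le:diff} and Theorem~\ref{thm:greedy}. (Note that no concentration property of $\Inf(\cdot)$ is needed for this theorem; concentration enters only later, to argue that $c$ and $c'$ — hence the additive error — are small for the graph classes of interest.)
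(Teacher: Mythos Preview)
Your proposal is correct and follows essentially the same route as the paper's proof: your Phase~1 index $p$ is the paper's index $i$ (defined by $\hat{E}[\Inf(S_i)]\ge(1+\gamma)(\eta-c)$), your Phase~2 index $q$ is the paper's index $j$ (defined by $E[\Inf(S_j)]\ge\eta+c'$), and the paper likewise invokes Theorem~\ref{thm:greedy} at threshold $\eta-c$ for the first phase and the $V$-comparison marginal-gain argument for the second, assembling the bound via $|S_a|\le |S_j|+1=|S_i|+|S_j\setminus S_i|+1$. The only cosmetic difference is that the paper compares against the optimum $S_i^*$ for threshold $\eta-c$ and then uses $|S_i^*|\le|S^*|$, whereas you use $S^*$ directly as the comparison set in Lemma~\ref{le:diff}; as you note, this is equivalent since the lemma uses only $f(S^*)\ge\eta-c$.
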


First, note that we assume $m > \eta$, so the multiplicative term above is well defined.
Moreover, $\eta + c'$ must be less than $m$, because otherwise $E[\Inf(S_{a-1})] = m = |U|$,
	which implies $\Pr(\Inf(S_{a-1}) = m) = 1$, contradicting the fact that
	$\Pr(\Inf(S_{a-1}) \ge \eta) < P+\varepsilon<1$.
Second, for the multiplicative ratio of  $\lceil \ln \frac{(1+\varepsilon_0)\eta n}{m-\eta} \rceil$, when $\eta$ is a constant fraction of $m$, i.e. $\eta = \beta m$
	where $\beta$ is a constant independent of $m$ and $n$,
	it is $\ln n + O(1)$, which is
	tight, since Theorem~\ref{thm:hard} already states that the ratio cannot be better
	than $\ln n$.
The additive error term involves $c$ and $c'$, and we will
	discuss it in more detail after providing
	the proof to the theorem below.
Third, when $\eta+c'$ is a constant fraction of $m$, $\gamma\le \frac{\varepsilon_0(m-(\eta+c'))^2}{8mn(m+\eta n)}= \Theta(\frac{\varepsilon_0}{n^2})$.
Fourth, by Chernoff bound, to achieve a $\gamma$-multiplicative error estimation of expected influence with probability $1-1/n$ for all subsets computed in our algorithm, it is sufficient to sample $\Theta(\gamma^{-2}n\log n)$ number of graphs for each set.

\begin{proof}
Let $i$ be the minimum index such that $S_i\in \mathcal{S}$ and $\hat{E}[\Inf(S_i)]\geq (1+\gamma)(\eta-c)$ (implying $E[\Inf(S_i)]\ge (\eta-c)$), and $S_i^*$ be the minimum-sized seed set such that $E[\Inf(S_i^*)]\geq \eta-c$.
Since $\gamma \le \frac{\varepsilon_0(m-(\eta+c'))^2}{8mn(m+\eta n)} \le \frac{\varepsilon_0(m-\eta)}{8n(m+\eta n)}$ and $\left\lceil \ln\left(\frac{(1+\varepsilon_0)(\eta-c) n}{m-(\eta-c)}\right) \right\rceil>0$, by
Theorem \ref{thm:greedy}, we have that
\[ |S_i| \le \left\lceil \ln\left(\frac{(1+\varepsilon_0)(\eta-c) n}{m-(\eta-c)}\right) \right\rceil |S^*_i| +1 \leq \left\lceil \ln\left(\frac{(1+\varepsilon_0)\eta n}{m-\eta}\right) \right\rceil |S^*_i| +1.\]
Since $E[\Inf(S^*)]\geq \eta-c$, we know that $|S_i^*|\leq |S^*|$,
\[ |S_i|\leq \left\lceil \ln\left(\frac{(1+\varepsilon_0)\eta n}{m-\eta}\right) \right\rceil |S^*| +1.\]

Let $j$ be the minimum index such that $S_j\in \mathcal{S}$ and $E[\Inf(S_j)]\geq \eta+c'$. Since $E[\Inf(S_{a-1})]\leq \eta+c'$, we know that $|S_j|\geq |S_{a-1}|$. To bound the difference between $|S_{a-1}|$ and $|S_i|$, it is sufficient to compute the difference between $|S_j|$ and $|S_i|$.

By the definition of $j$, we have that $E[\Inf(S_{j-1})]< \eta+c'$. Since $E[\Inf(S_i)]\ge \hat{E}[\Inf(S_i)]/(1+\gamma) \ge \eta-c$, we have $E[\Inf(S_{j-1})]-E[\Inf(S_i)]< c+c'$. For any $i<t<j$, by a similar analysis in Lemma~\ref{le:diff}, we know that there exists a node $x\in V\setminus S_t$ satisfying $E[\Inf(S_{t-1}\cup \{x\})]-E[\Inf(S_{t-1})] \ge \frac{m-(\eta+c')}{n}$. Then,
\begin{eqnarray*}
&& \hat{E}[\Inf(S_t)] -\hat{E}[\Inf(S_{t-1})] \\
&\ge & \hat{E}[\Inf(S_{t-1} \cup \{x\})] - \hat{E}[\Inf(S_{t-1})] \\
&\ge & (1-\gamma)E[\Inf(S_{t-1}\cup \{x\})] - (1+\gamma)E[\Inf(S_{t-1})] \\
&= & E[\Inf(S_{t-1} \cup \{x\})] -E[\Inf(S_{t-1})] -\gamma\left(  E[\Inf(S_{t-1} \cup \{x\})] +E[\Inf(S_{t-1})] \right) \\
&\ge & \frac{m-(\eta+c')}{n} -2\gamma(\eta+c').
\end{eqnarray*}
Thus,
\begin{eqnarray*}
&& E[\Inf(S_t)] -E[\Inf(S_{t-1})] \\
&\ge & \frac{\hat{E}[\Inf(S_t)]}{1+\gamma} - \frac{\hat{E}[\Inf(S_{t-1})]}{1-\gamma} \\
&= & \hat{E}[\Inf(S_t)] - \hat{E}[\Inf(S_{t-1})] -\gamma \left( \frac{\hat{E}[\Inf(S_t)]}{1+\gamma} -\frac{\hat{E}[\Inf(S_{t-1})]}{1-\gamma} \right) \\
&\ge & \frac{m-(\eta+c')}{n} -\gamma\left(2+\frac{1}{1+\gamma}\right) (\eta+c').
\end{eqnarray*}
Therefore,
\begin{eqnarray*}
|S_{j-1}\setminus S_i | &\le &
 \frac{E[\Inf(S_{j-1})]-E[\Inf(S_{i})]}{\min_{i<t<j}\{E[\Inf(S_{t})]-E[\Inf(S_{t-1})]\}}\\
&<& (c+c')\cdot\left( \frac{m-(\eta+c')}{n} -\gamma\left(2+\frac{1}{1+\gamma}\right) (\eta+c') \right)^{-1}.
\end{eqnarray*}
Since $\gamma \le \frac{\varepsilon_0(m-(\eta+c'))^2}{8mn(m+\eta n)} \le \frac{\varepsilon_0(m-(\eta+c'))^2}{3n(\eta+c')((c+c')n+\varepsilon_0(m-(\eta+c')))}$,


\begin{eqnarray*}
&& \frac{c+c'}{ \frac{m-(\eta+c')}{n} -\gamma\left(2+\frac{1}{1+\gamma}\right) (\eta+c') } \\
&\le & \frac{(c+c')n}{ m-(\eta+c') -3\gamma (\eta+c')n } \\
&\le & \frac{(c+c')n}{ m-(\eta+c') -\frac{\varepsilon_0(m-(\eta+c'))^2}{(c+c')n +\varepsilon_0(m-(\eta+c'))} } \\
&& \quad \mbox{(by $\gamma\le \frac{\varepsilon_0(m-(\eta+c'))^2}{3n(\eta+c')((c+c')n+\varepsilon_0(m-(\eta+c')))}$)}\\
&= & \frac{(c+c')n((c+c')n+\varepsilon_0(m-(\eta+c')))}{(m-(\eta+c'))(c+c')n }\\
&=& \frac{(c+c')n}{m-(\eta+c')} +\varepsilon_0.
\end{eqnarray*}
It means that
\[ |S_{j}\setminus S_i|< \frac{(c+c')n}{m-(\eta+c')} +\varepsilon_0+1.\]
Since $|S_a|\leq |S_j|+1=|S_i|+|S_j\setminus S_i|+1$, we have
\[ |S_a|\leq \left\lceil \ln \left(\frac{(1+\varepsilon_0)\eta n}{m-\eta}\right) \right\rceil |S^*|+\frac{(c+c')n}{m-(\eta+c')} +3+\varepsilon_0. \]
\qedsymbol
\end{proof}

We now discuss the additive term in Inequality~\eqref{eq:approx}.
To make it nontrivial, we need the additive term to be $o(n)$ as $n$ grows.
This means first that the target set size $m$ should be increasing with $n$, which is reasonable.
Then we should have $c+c'=o(m)$ in order to make the additive term $o(n)$.
In the following theorem, we bound $c$ and $c'$ by the variances of the influence coverage
	of $S^*$ and $S_{a-1}$ respectively, and thus linking the above requirement on $c$ and $c'$
	to the requirement on the variances of influence coverages.

\begin{theorem} \label{thm:additive}
For algorithm $\MinSeed[\varepsilon]$ with any parameter $\varepsilon$, we have
\begin{equation} \label{eq:c}
c\leq \sqrt{\frac{\Var(\Inf(S^*))}{P}}.
\end{equation}
If we use $\MCCP[R]$ for function $\ComputeProbability$ and set
	$R \ge \ln(2n^2)/(2\varepsilon^2) $, then
	algorithm $\MinSeed[\varepsilon]$ finds a seed set $S_a$ such that,
	with probability at least $1-1/n$, $\Pr(\Inf(S_a)\ge \eta)\ge P$ and
\begin{equation} \label{eq:capprox}
c'\leq \sqrt{\frac{\Var(\Inf(S_{a-1}))}{1-P-2\varepsilon}}.
\end{equation}
\end{theorem}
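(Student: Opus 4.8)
The plan is to prove the two bounds on $c$ and $c'$ separately, both via Chebyshev's inequality (Fact~\ref{fact:chebyshev}) applied to the appropriate influence coverage random variable, and in the second case to also handle the estimation error introduced by $\MCCP[R]$.

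For the bound on $c$ in Inequality~\eqref{eq:c}: recall $c = \max\{\eta - E[\Inf(S^*)], 0\}$. If $c = 0$ the inequality is trivial, so assume $E[\Inf(S^*)] < \eta$, i.e. $c = \eta - E[\Inf(S^*)] > 0$. The key fact we may invoke is that $S^*$ is feasible for SM-PCG, so $\Pr(\Inf(S^*) \ge \eta) \ge P$. Now if $\Inf(S^*) \ge \eta$, then $\Inf(S^*) - E[\Inf(S^*)] \ge \eta - E[\Inf(S^*)] = c$, hence $|\Inf(S^*) - E[\Inf(S^*)]| \ge c$; therefore $P \le \Pr(\Inf(S^*)\ge\eta) \le \Pr(|\Inf(S^*) - E[\Inf(S^*)]| \ge c)$. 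Applying Chebyshev with $t = c$ gives $P \le \Var(\Inf(S^*))/c^2$, i.e. $c \le \sqrt{\Var(\Inf(S^*))/P}$, as claimed.

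For the bound on $c'$ in Inequality~\eqref{eq:capprox}: recall $c' = \max\{E[\Inf(S_{a-1})] - \eta, 0\}$. The subtlety is that the stopping criterion of $\MinSeed[\varepsilon]$ is phrased in terms of the \emph{estimated} probability $\prob$ returned by $\MCCP[R]$, not the true probability. First I would invoke Lemma~\ref{lem:compProb} with $\delta = 2$: setting $R \ge \ln(2n^2)/(2\varepsilon^2)$ guarantees that for any \emph{single} fixed seed set, the estimate $\hat P$ satisfies $|\hat P - P_{\mathrm{true}}| \le \varepsilon$ with probability at least $1 - 1/n^2$. Since the algorithm computes at most $n$ such estimates (one per iteration), a union bound gives that with probability at least $1 - 1/n$ \emph{all} the estimates are within $\varepsilon$ of their true values; condition on this event. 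On this event, since $S_a$ is the output, its estimate was $\ge P + \varepsilon$, hence the true probability $\Pr(\Inf(S_a) \ge \eta) \ge (P+\varepsilon) - \varepsilon = P$, which is the first claim. Also on this event, $S_{a-1}$ did not trigger the stopping condition, so its estimate was $< P + \varepsilon$, hence its true probability $\Pr(\Inf(S_{a-1}) \ge \eta) < (P+\varepsilon) + \varepsilon = P + 2\varepsilon$, i.e. $\Pr(\Inf(S_{a-1}) < \eta) > 1 - P - 2\varepsilon$. Now the Chebyshev step mirrors the first case: if $c' = 0$ the bound is trivial; otherwise $c' = E[\Inf(S_{a-1})] - \eta > 0$, and whenever $\Inf(S_{a-1}) < \eta$ we have $E[\Inf(S_{a-1})] - \Inf(S_{a-1}) > E[\Inf(S_{a-1})] - \eta = c'$, so $|\Inf(S_{a-1}) - E[\Inf(S_{a-1})]| > c'$; thus $1 - P - 2\varepsilon < \Pr(|\Inf(S_{a-1}) - E[\Inf(S_{a-1})]| > c') \le \Var(\Inf(S_{a-1}))/c'^2$ by Chebyshev, giving $c' \le \sqrt{\Var(\Inf(S_{a-1}))/(1-P-2\varepsilon)}$. (Note $1 - P - 2\varepsilon > 0$ since $\varepsilon < (1-P)/2$, so the square root is well-defined.)

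The main obstacle is not the Chebyshev computation, which is essentially symmetric in the two cases, but rather handling the discrepancy between the true probabilities and the Monte Carlo estimates cleanly: one must be careful that the stopping rule compares $\prob$ to $P + \varepsilon$, so the $\varepsilon$ is "spent" differently on the two sides ($S_a$ loses $\varepsilon$ from the gap, landing exactly at $P$; $S_{a-1}$ gains $\varepsilon$, landing at $P + 2\varepsilon$), and that the union bound over the (at most $n$) seed sets examined is what converts the per-set guarantee of Lemma~\ref{lem:compProb} into the global high-probability statement. Everything else follows from the two-line Chebyshev argument applied once on each side.
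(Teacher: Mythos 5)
Your proposal is correct and follows essentially the same route as the paper's own proof: Chebyshev's inequality applied to $\Inf(S^*)$ using the feasibility $\Pr(\Inf(S^*)\ge\eta)\ge P$ for the bound on $c$, and for $c'$ the combination of Lemma~\ref{lem:compProb} with $\delta=2$, a union bound over the at most $n$ seed sets examined, the resulting bounds $\Pr(\Inf(S_a)\ge\eta)\ge P$ and $\Pr(\Inf(S_{a-1})\ge\eta)< P+2\varepsilon$, and a second Chebyshev application on the lower-tail side. The accounting of how the $\varepsilon$ is spent on the two sides of the stopping rule matches the paper exactly.
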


\begin{proof}
We first prove Inequality~\eqref{eq:c}.
If $E[\Inf(S^*)]\geq \eta$, by the definition of $c$ we know that $c=0$, and the first inequality holds trivially. Thus, we only consider the situation where $E[\Inf(S^*)]<\eta$.
\begin{eqnarray*}
&&\Pr(\Inf(S^*)\geq \eta)\\
&=&\Pr(\Inf(S^*)-E[\Inf(S^*)]\geq \eta-E[\Inf(S^*)])\\
&\leq& \Pr(|\Inf(S^*)-E[\Inf(S^*)]|\geq \eta-E[\Inf(S^*)])\\
&\leq& \frac{\Var(\Inf(S^*))}{(\eta-E[\Inf(S^*)])^2}.
\end{eqnarray*}
The last inequality comes from Chebyshev's inequality (Fact~\ref{fact:chebyshev}). Since $\Pr(\Inf(S^*)\geq \eta)\geq P$, by solving the above inequality we have that $\eta-E[\Inf(S^*)]\leq \sqrt{\frac{\Var(\Inf(S^*))}{P}}$, that is, $c\leq \sqrt{\frac{\Var(\Inf(S^*))}{P}}$.


Now suppose that we use $\MCCP[R]$ as an approximation to function
	$\ComputeProbability$.
By Lemma~\ref{lem:compProb}, we know that when we set $R \ge  \ln(2n^2)/(2\varepsilon^2)$,
	for any one seed set $S$, with probability at least $1-1/n^2$,
	algorithm $\MCCP[R]$ approximates the true value within error bound $\varepsilon$.
By union bound, we know that, with probability at least $1-1/n$,
	algorithm $\MinSeed[\varepsilon]$ computes probability
	$\prob$ in line~\ref{line:getprob} for {\em all} seed sets $S_1, S_2, \ldots, $ in $\cal S$
	within error bound $\varepsilon$.
Since for $S_a$, its computed probability is at least $P+\varepsilon$, we know that
	with probability $1-1/n$, $\Pr(\Inf(S_a)\ge \eta )\ge P$.

We now derive Inequality~\eqref{eq:capprox}.
If $E[\Inf(S_{a-1})]\leq \eta$, by the definition of $c'$ we know that $c'=0$, and the
	inequality holds trivially.
Thus, we only need to consider the situation that $E[\Inf(S_{a-1})]> \eta$.
By the algorithm we know that the computed probability of $\Pr(\Inf(S_{a-1})\geq \eta)$
	is less than $P+\epsilon$, so with probability at least $1-\frac{1}{n}$ we have  $\Pr(\Inf(S_{a-1})\geq \eta)< P+2\varepsilon$,
	which means that $\Pr(\Inf(S_{a-1})\leq \eta)\geq 1-P-2\varepsilon$. On the other hand,
\begin{eqnarray*}
&&\Pr(\Inf(S_{a-1})\leq \eta)\\
&=&\Pr(E[\Inf(S_{a-1})]-\Inf(S_{a-1})\geq E[\Inf(S_{a-1})]-\eta)\\
&\leq& \Pr(|E[\Inf(S_{a-1})]-\Inf(S_{a-1})|\geq E[\Inf(S_{a-1})]-\eta)\\
&\leq& \frac{\Var(\Inf(S_{a-1}))}{(E[\Inf(S_{a-1})]-\eta)^2}.
\end{eqnarray*}
The last inequality comes from Chebyshev's inequality (Fact~\ref{fact:chebyshev}).
Thus, we have $E[\Inf(S_{a-1})]-\eta \leq \sqrt{\frac{\Var(\Inf(S_{a-1}))}{1-P-2\varepsilon}}$, that is, $c'\leq \sqrt{\frac{\Var(\Inf(S_{a-1}))}{1-P-2\varepsilon}}$.
%
\qedsymbol \end{proof}

Theorem~\ref{thm:additive} shows that the variances of influence coverages of seed sets,
	or more exactly the standard deviations of influence coverages, determine the scale of
	the additive error term of the algorithm $\MinSeed[\varepsilon]$.
If influence coverages concentrate well with small standard deviations, the algorithm would
	have a good additive error term.
Consider the common case where target set size $m=\Theta(n)$, and $\eta$
	is a constant fraction of $m$, and $P$ is a normal probability requirement not too close
	to $0$ or $1$ (e.g. $0.1$ or $0.5$), if we could have $\Var(\Inf(S^*))=O(m)$ and
	$\Var(\Inf(S_{a-1}))=O(m)$, then $c+c' = O(\sqrt{m})$, and the additive error term
	is $O(n/\sqrt{m}) = O(\sqrt{n}) $.
Together with Theorem~\ref{thm:approxratio}, we would know that
\[
|S_a| \le (\ln n + O(1))|S^*| + O(\sqrt{n}).
\]
In the next section, we analytically show that for one-way bipartite graphs indeed
	$c+c' = O(\sqrt{n})$ (when $m=\Theta(n)$).
We also empirically verify that in real-world graphs the standard deviations of influence coverages
	are indeed small, close to $\sqrt{n}$.
Therefore, our algorithm are likely to perform well in practice.

We remark that our theorems in this section can be applied to a class of models with the following characteristics:
\begin{enumerate}
\item the influence coverage function of a seed set (i.e., $\Inf(\cdot)$) is nonnegative, monotone and submodular, thus greedy algorithm gives an $O(\log n)$-approximation ratio for SM-ECG (Theorem~\ref{thm:greedy}) and provides a tight multiplicative ratio.
\item the influence coverage when choosing the whole set of nodes as seeds is the size of the targeted set (i.e., $\Inf(V) = |U|$), which guarantees that the additive error is reasonable.
\end{enumerate}
The above class includes many diffusion models, such as linear threshold model, general threshold model and continuous time diffusion model.

\rmv{We further remark that Theorems~\ref{thm:generalapprox} and~\ref{thm:additive}
	can be applied to any diffusion models
	with monotone and submodular expected influence coverage functions, such as
	the linear threshold or general threshold models.}

\section{Results on Bipartite Graphs}
\label{sec:bipartite}
In this section, we solve the SM-PCG problem
	on a one-way bipartite graph $G=(V_1,V_2, E)$, where all edges in $E$ are from $V_1$ to $V_2$. 
For the sake of convenience, we just assume that $U=V_2$ in this section.
It is easy to remove this assumption and make $U$ to be any subset of $V_1\cup V_2$.

One-way bipartite graphs provide two significant advantages over general graphs.
First, it allows a dynamic programming method to compute the exact influence coverage distribution
	given any seed set $S$.
Second, it allows a theoretical analysis on the concentration of influence coverages of seed sets.
We illustrate both aspects below.

We first show how to implement exact computation of function $\ComputeProbability$.
We assign indices for nodes in $V_2$: $v_1, v_2,\ldots, v_m$. Let $A(S,i,j)$ denote the probability that seed set $S$ can activate exactly $j$ nodes in the first $i$ nodes of $V_2$: $v_1, \ldots, v_i$, where $j\leq i$.
Let $p(S, v)$ be the probability that $v$ can be activated by $S$. When $i=1$, it is trivial to get $A(S,1,j)$.
When $i>1$, we can use $A(S,i-1,j-1)$ and $A(S,i-1,j)$ to compute $A(S,i,j)$. If $j=0$, it means $v_1, \ldots, v_{i-1}$ and $v_i$ are all inactive. If $0<j<i$, there are two cases: $j$ nodes are activated in the first $i-1$ nodes while $v_i$ is not activated; $j-1$ nodes are activated in the first $i-1$ nodes and $v_i$ is activated. If $j=i$, both $v_1,\ldots,v_{i-1}$ and $v_i$ are activated. Thus, we have the following recursion,
\begin{displaymath}
A(S,1,j)=\left\{
              \begin{array}{ll}
                p(S, v_1), & \hbox{$j=1$} \\
                1-p(S, v_1), & \hbox{$j=0$}
              \end{array}
            \right.
\end{displaymath}
and
\begin{displaymath}
A(S, i, j) =  \left\{
              \begin{array}{ll}
                A(S,i-1,j)\cdot(1-p(S, v_i)), & \hbox{$j=0$} \\
                A(S,i-1,j)\cdot(1-p(S, v_i)) & \\
                +A(S,i-1,j-1)\cdot p(S, v_i), & \hbox{$0<j<i$} \\
                A(S,i-1,j-1)\cdot p(S, v_i), & \hbox{$j=i$}
              \end{array}
            \right.
\end{displaymath}
For IC model, $p(S, v_i) = 1-\prod_{u\in S}(1-p_{u,v_i})$; and for LT model, $p(S, v_i) = \sum_{u\in S} p_{u, v_i}$.
Using the above dynamic programming formulation, we can implement
	function $\ComputeProbability$ as function $\BiCP$ given
	in Algorithm \ref{alg:BiCP}.

\begin{algorithm}[thb]
\renewcommand{\algorithmicrequire}{\textbf{Input:}}
\renewcommand\algorithmicensure {\textbf{Output:}}
    \caption{Function $\BiCP$ for bipartite graphs} \label{alg:BiCP}
    \begin{algorithmic}[1]
    \REQUIRE ~~ $G=(V_1,V_2,E),\{p_{u,v}\}_{(u,v)\in E}, S, \eta$\\
    \ENSURE ~~  $P=\Pr(\Inf(S)\geq \eta)$\\
\FOR {$i$ from 1 to $n$, and $j$ from $1$ to $i$}
    \STATE \ compute $A(S,i,j)$ via dynamic programming\\
\ENDFOR
\RETURN $\sum_{j=\eta}^{m} A(S,m,j)$
\end{algorithmic}
\label{probabilityonbipartite}
\end{algorithm}

One-way bipartite graphs have an important property that the activation events of nodes
	in $V_2$ are mutually independent.
This allows us to bound  $c$ and $c'$ defined in Section 5 using Hoeffding's Inequality, as
	shown in the following theorem.

\begin{theorem}
For algorithm $\MinSeed[0]$ on one-way bipartite graph $G=(V_1,V_2,E)$ , we have
\[
c\leq \sqrt{\frac{m}{2}\ln\frac{1}{P}}, c'\leq \sqrt{\frac{m}{2}\ln \frac{2}{1-P}}.
\]
\end{theorem}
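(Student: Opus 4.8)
The plan is to exploit the one-way structure of $G=(V_1,V_2,E)$ (recall $U=V_2$, so $m=|V_2|$) in order to write $\Inf(S)$ as a sum of independent indicators, and then to rerun the argument of Theorem~\ref{thm:additive} with Hoeffding's inequality (Fact~\ref{fact:hoeffding}) in place of Chebyshev's inequality. The starting observation is structural: in a one-way bipartite graph a node $v\in V_2$ becomes active under seed set $S$ if and only if $v\in S$ or at least one edge from $S\cap V_1$ into $v$ is live, and distinct nodes of $V_2$ have disjoint sets of incoming edges. Hence the activation indicators $X_v$ ($v\in V_2$, with $X_v=1$ iff $v$ is activated) are functions of disjoint, independent blocks of edge outcomes, so they are mutually independent. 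Thus $\Inf(S)=\sum_{v\in V_2}X_v$ is a sum of $m$ independent $[0,1]$-valued random variables, and Fact~\ref{fact:hoeffding} applies with $\sum_{v}(b_v-a_v)^2=m$.

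To bound $c=\max\{\eta-E[\Inf(S^*)],0\}$: if $E[\Inf(S^*)]\ge\eta$ then $c=0$ and the claim is trivial, so assume $c=\eta-E[\Inf(S^*)]>0$. Feasibility of $S^*$ gives $P\le\Pr(\Inf(S^*)\ge\eta)=\Pr(\Inf(S^*)-E[\Inf(S^*)]\ge c)\le\exp(-2c^2/m)$ by the one-sided Hoeffding bound, and solving for $c$ yields $c\le\sqrt{(m/2)\ln(1/P)}$.

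To bound $c'=\max\{E[\Inf(S_{a-1})]-\eta,0\}$: the case $E[\Inf(S_{a-1})]\le\eta$ is again trivial, so assume $c'=E[\Inf(S_{a-1})]-\eta>0$. Since $\MinSeed[0]$ uses the exact routine $\BiCP$ with $\varepsilon=0$, the returned set $S_a$ is the first set of the greedy sequence with $\Pr(\Inf(S_a)\ge\eta)\ge P$, hence $\Pr(\Inf(S_{a-1})\ge\eta)<P$ (here $S_{a-1}$ is well defined, possibly $\emptyset$, and the algorithm does terminate since $\Inf(V_1\cup V_2)=m>\eta$ with probability $1$). Therefore $\Pr(\Inf(S_{a-1})\le\eta)\ge 1-\Pr(\Inf(S_{a-1})\ge\eta)>1-P$, while the two-sided Hoeffding bound gives $\Pr(\Inf(S_{a-1})\le\eta)=\Pr(E[\Inf(S_{a-1})]-\Inf(S_{a-1})\ge c')\le\Pr(|\Inf(S_{a-1})-E[\Inf(S_{a-1})]|\ge c')\le 2\exp(-2c'^2/m)$; combining these and solving for $c'$ gives $c'\le\sqrt{(m/2)\ln(2/(1-P))}$.

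The only delicate point — and the step I would write out most carefully — is the mutual independence of the events $\{X_v=1\}$: this is precisely where the one-way (no back edges) assumption is used, since it makes each $X_v$ a function of a separate block of independent edge coin flips; everything after that is a mechanical substitution of Hoeffding for Chebyshev into the proof of Theorem~\ref{thm:additive}. (Using the one-sided Hoeffding bound in the $c'$ step would even replace $\ln\frac{2}{1-P}$ by the smaller $\ln\frac{1}{1-P}$, but the stated bound already suffices for the $O(\sqrt{n})$ additive error.)
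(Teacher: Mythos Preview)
Your proposal is correct and follows essentially the same route as the paper: write $\Inf(S)$ as a sum of $m$ independent Bernoulli indicators using the one-way bipartite structure, then replace Chebyshev by the one-sided (for $c$) and two-sided (for $c'$) Hoeffding bounds in the argument of Theorem~\ref{thm:additive}, using $\varepsilon=0$ because $\BiCP$ is exact. Your additional remark that a one-sided bound would sharpen the $c'$ estimate to $\sqrt{(m/2)\ln\frac{1}{1-P}}$ is also correct.
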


\begin{proof}
Suppose $X_1,\ldots,X_m$ are random variables corresponding to
	nodes $v_1,v_2,\ldots, v_m$ in $V_2$, such that for each $1\leq i \leq m$, $X_i=1$ if $v_i$ is activated and $X_i=0$ otherwise.
Since $G$ is a bipartite graph and all edges in $E$ are from $V_1$ to $V_2$, after nodes in $V_2$
	are activated, they will not continue to influence other nodes.
It means that for each $v_i\in V_2$, whether $v_i$ is activated is independent with the activations of other nodes in $V_2$. In other words, $X_i$ are independent random variables.

Similar to the proof of Theorem~\ref{thm:additive}, we only need to consider the situation where $E[\Inf(S^*)]<\eta$ and $E[\Inf(S_{a-1})]>\eta$.
\begin{eqnarray*}
&&\Pr(\Inf(S^*)\geq \eta )\\
&=&\Pr(\Inf(S^*)-E(\Inf(S^*))\geq \eta -E(\Inf(S^*)))\\
&\leq& \exp \left(-\frac{2(\eta -E(\Inf(S^*)))^2}{m}\right).
\end{eqnarray*}
The last inequality comes from Hoeffding's inequality (Fact~\ref{fact:hoeffding}).
Since $\Pr(\Inf(S^*)\geq\eta )\geq P$, by solving the above inequality, we get $\eta-E[\Inf(S^*)]\leq \sqrt{\frac{m}{2}\ln \frac{1}{P}}$, that is, $c\leq \sqrt{\frac{m}{2}\ln \frac{1}{P}}$.

%
On the other hand, we know that $\Pr(\Inf(S_{a-1})\geq \eta)<P$, as well as
\begin{eqnarray*}
&&\Pr(\Inf(S_{a-1})\leq \eta)\\
&=&\Pr(E(\Inf(S_{a-1}))-\Inf(S_{a-1})\geq E(\Inf(S_{a-1})-\eta))\\
&\leq& \Pr(|E(\Inf(S_{a-1}))-\Inf(S_{a-1})|\geq E(\Inf(S_{a-1})-\eta)\\
&\leq& 2exp(-\frac{2(E(\Inf(S_{a-1})-\eta )^2}{m}).
\end{eqnarray*}
The last inequality comes from Hoeffding's inequality (Fact~\ref{fact:hoeffding}). Since $\Pr(\Inf(S_{a-1})\leq \eta)\geq 1-P$, we have that $\eta-E[\Inf(S_{a-1})]\leq \sqrt{\frac{m}{2}\ln \frac{2}{1-P}}$, that is, $c'\leq \sqrt{\frac{m}{2}\ln \frac{2}{1-P}}$.
\qedsymbol \end{proof}

Together with Theorem \ref{thm:approxratio} we get the following corollary.

\begin{corollary}
For one-way bipartite graphs, algorithm $\MinSeed[0]$ using function $\BiCP$ returns seed
	set $S_a$ such that $\Pr(\Inf(S_a)\ge \eta) \ge P$, and when we consider the probability
	threshold $P$ as a constant independent of $n$ and $m$, we have
$$|S_a|\le (\ln n + O(1)) |S^*|+O(\frac{n}{\sqrt{m}}).$$
\end{corollary}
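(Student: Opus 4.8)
The plan is to combine the previous theorem bounding $c$ and $c'$ on one-way bipartite graphs with the general approximation guarantee of Theorem~\ref{thm:approxratio}, specialized to the setting where $P$ is a constant and $\eta$ is a constant fraction of $m$. First I would invoke the immediately preceding theorem to obtain $c \le \sqrt{\tfrac{m}{2}\ln\tfrac{1}{P}}$ and $c' \le \sqrt{\tfrac{m}{2}\ln\tfrac{2}{1-P}}$, so that, treating $P$ as a constant, both $c$ and $c'$ are $O(\sqrt{m})$, and hence $c + c' = O(\sqrt{m})$. Because the computation of $\ComputeProbability$ via $\BiCP$ (Algorithm~\ref{alg:BiCP}) is \emph{exact} on one-way bipartite graphs, we run $\MinSeed[\varepsilon]$ with $\varepsilon = 0$, so the success event ``$\Pr(\Inf(S_a)\ge \eta)\ge P$'' holds deterministically rather than only with probability $1-1/n$; this is where I would note that $S_a$ is by construction the first set in the greedy sequence whose exactly-computed coverage probability reaches $P$.

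Next I would feed these bounds into inequality~\eqref{eq:approx} of Theorem~\ref{thm:approxratio} (with, say, $\varepsilon_0$ taken to be a fixed constant such as $1$, and $\gamma = 0$ since we use exact expected-influence estimation — one should check the hypothesis $\gamma \le \tfrac{\varepsilon_0(m-(\eta+c'))^2}{8mn(m+\eta n)}$ is trivially satisfied by $\gamma=0$, which also requires $m > \eta + c'$, guaranteed because $S_{a-1}$ cannot have $E[\Inf(S_{a-1})] = m$). The multiplicative term $\lceil \ln((1+\varepsilon_0)\eta n/(m-\eta))\rceil$ becomes $\ln n + O(1)$ once $\eta = \beta m$ for a constant $\beta$, since then $\eta n/(m-\eta) = \tfrac{\beta}{1-\beta} n$. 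The additive term $\tfrac{(c+c')n}{m-(\eta+c')} + 3 + \varepsilon_0$ is, using $c+c' = O(\sqrt{m})$ and $m - (\eta+c') = (1-\beta)m - O(\sqrt{m}) = \Theta(m)$ for $m$ large, of order $O(\sqrt{m}\cdot n / m) = O(n/\sqrt{m})$, absorbing the additive constants. Putting these together yields exactly $|S_a| \le (\ln n + O(1))|S^*| + O(n/\sqrt{m})$.

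The main obstacle, though a mild one, is bookkeeping the side conditions: verifying that $m > \eta + c'$ so the denominator $m-(\eta+c')$ is positive (hence $\Theta(m)$), and confirming that the hypotheses of Theorem~\ref{thm:approxratio} hold with $\gamma = 0$ and a legitimate choice of $\varepsilon_0 \in [0,1]$. One should also be slightly careful that the corollary as stated does not assume $\eta$ is a constant fraction of $m$ in its hypotheses, yet the ``$\ln n + O(1)$'' conclusion implicitly uses it; so in the write-up I would either state that assumption explicitly or note that without it the multiplicative factor is the more general $\lceil\ln(\Theta(1)\cdot \eta n/(m-\eta))\rceil$ and the additive term is $O\!\left(\tfrac{n\sqrt m}{m-\eta}\right)$, which reduces to the stated bound in the constant-fraction regime. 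No genuinely new estimate is needed beyond the two previous theorems.
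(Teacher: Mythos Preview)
Your proposal is correct and follows exactly the paper's approach: the paper itself gives no explicit proof, simply stating that the corollary follows by combining the preceding theorem's bounds $c,c'=O(\sqrt{m})$ with Theorem~\ref{thm:approxratio}. Your write-up is in fact more careful than the paper's, as you correctly flag the implicit constant-fraction assumption on $\eta$ needed for the $\ln n + O(1)$ form and the side condition $m>\eta+c'$.
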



We note that one-way bipartite graphs are a restricted class of graphs, where the influence cascading is a 1-hop cascading process and cannot be generated to a cascade with greater depth.
However, we believe their
	analytical results can shed lights on more realistic networks when most of
	node activations in the network are independent.

\section{Experiments}
\label{sec:exp}
We conduct experiments on real social networks for the  following purposes: (1) test the concentration of influence coverage distributions of seed sets; (2) validate the performance
	of our algorithm against baseline algorithms.

\subsection{Experiment setup}
\textbf{Datasets.} We conduct experiments on three real social networks. The first one is wiki-Vote, published by Leskovec~\cite{website:wiki-Vote}. It is a network relationship graph from Wikipedia community, with totally 7,115 nodes and 103,689 edges. In wiki-Vote graph, each node represents a user in Wikipedia community, and an edge $(u,v)$ represents user $u$ votes for user $v$, which means that $v$ has an influence on $u$. Thus, in our experiment, we reverse all edges to express the influence between pairs of nodes.
We use weighted cascade (WC) model \cite{kempe2003maximizing}
to assign the influence probabilities on edges. For each edge $(u,v)$, we assign its probability to be $1/d_{in}(v)$, where $d_{in}(v)$ is the in-degree of node $v$.

The second network is NetHEPT, which is a standard dataset used in \cite{ChenWY09,chen2010scalable,ChenYZ10,simpath,goyal2012minimizing}.
NetHEPT is an academic collaboration network from arXiv (http://www.arXiv.org), with totally 15,233 nodes and 58,891 edges. In NetHEPT graph, each node represents an author, and each edge represents coauthor relationship between two authors. NetHEPT is an undirected graph, and in our experiment we add two directed edges between two nodes if there exists at least one edge between these two nodes in NetHEPT.
Similar to wiki-Vote, we use WC model to assign edge influence probabilities.
We assign the probability on directed edge $(u,v)$ to be $d(u,v)/d(v)$, where $d(u,v)$ is the number of papers collaborated by $u$ and $v$, and $d(v)$ is the number of papers published by $v$.

The last one is Flixster, an American movie rating social site. Each node is a user, and edges describe the friendship between users. In this network, we use a Topic-aware Independent Cascade Model from \cite{barbieri2012topic} to learn the real influence probabilities on edges for different topics. We simply use two different topics, say topic 1 and topic 2, and get the edge probabilities that one user influences his/her friend on the specific topic. In both topics, we remove edges with probability 0 and isolated nodes. For topic 1, there are 28,317 nodes and 206,012 edges. The mean of edge probabilities is 0.103, and the standard deviation is 0.160. For topic 2, there are 25,474 nodes and 135,618 edges. The mean of edge probabilities is 0.133, and the standard deviation is 0.205.


\textbf{Experiment methods.} In the experiment, for the sake of convenience, we set $U=V$.

Our first task is to test the concentration of influence coverage distributions of seed sets.
To do so, we test the variances (or their square roots, i.e. standard deviations).
According to Theorem~\ref{thm:additive}, small standard deviations imply small
	$c$ and $c'$ and thus small additive errors
	of the $\MinSeed[\varepsilon]$ algorithm output.
By Inequality~(\ref{eq:capprox}), to verify that $c'$ is small, we just need to
	test the standard deviations of all seed sets generated by the algorithm.
For quantity $c$, we need to test the standard deviation of the influence coverage of the
	optimal seed set, according to Inequality~(\ref{eq:c}).
However, finding the optimal seed set is NP-hard, therefore we cannot fully verify
	the bound on $c$.
To compensate, we test randomly selected seed sets as follows.
For each fixed seed set size $k$, we independently select 10 seed sets of size $k$ at random,
	and compute the maximum standard deviations of the influence coverage distributions
	of these selected seed sets.
Although randomly selected seed sets may be far from the optimal seed set, what we hope is that
	by testing standard deviations on both randomly selected sets and greedily selected sets
	by algorithm $\MinSeed[\varepsilon]$, we have a general understanding of standard deviations
	of influence coverages of seed sets, which may provide us with hints for other seed sets, such as the optimal seed
	set.
To estimate the standard deviations of influence coverage of a seed set $S$, we use 10,000 times Monte Carlo simulation and compute the variance, and
	take its square root to obtain the standard deviation.

Our second task is to test the performance of seed selection algorithm
	$\MinSeed[\varepsilon]$.
We compare the performance with three baseline algorithms: (a) {\sf Random},
	which generates the seed set sequence in random order;
	(b) {\sf High-degree}, which generates the seed set sequence according
	to the decreasing order of the out-degree of nodes; and
	(c) {\sf PageRank}, which is a popular method for website ranking~\cite{brin1998anatomy}. We use 	$p_{v,u}/ \sum_{(w,u)\in E} p_{w,u}$ as the transition probability for edge $(u,v)$. Higher 					$p_{v,u}$ means that $v$ is more influential to $u$, indicating that $u$ ranks $v$ higher. We use 		0.15 as the restart probability and use the power method to compute PageRank values. When two 		consecutive iterations are different for at most $10^{-4}$ in $L_1$ norm, we stop.
As for our $\MinSeed[\varepsilon]$ algorithm, to speed up the algorithm,
	we use the state-of-the-art PMIA algorithm of \cite{chen2010scalable} to greedily generate the seed set sequence.
For all the above algorithms, we use the same $\MCCP[R]$
	algorithm to compare whether
	a seed set $S$ in the sequence satisfies the condition
	$\Pr(\Inf(S)\ge \eta) \ge P+\varepsilon$.
Since the seed set sequence generations in all the above algorithms are fast
	comparing to the Monte Carlo simulation based $\MCCP[R]$ algorithm,
	our implementation actually generates the sequence first and then uses binary
	search to find the seed set in the sequence satisfying
	$\Pr(\Inf(S)\ge \eta) \ge P+\varepsilon$.

We set parameters $R=10,000$ and $\varepsilon = 0.01$.
One may see that these settings do not satisfy the condition
	$R \ge  \ln(2n^2)/(2\varepsilon^2)$ in
	Theorem~\ref{thm:additive} for our datasets: in our datasets, $n$
	is around $10^4$, and thus $\ln(2n^2)/(2\varepsilon^2)$ is
	around $9.6 \times 10^4$.
However, we can justify our choice as follows.
First, the condition $R \ge  \ln(2n^2)/(2\varepsilon^2)$ is a conservative
	theoretical condition for obtaining high probability of $1-1/n$ for our
	approximation guarantee.
In practice, a smaller $R$ of $10,000$ is good enough for illustrating our
	results.
Second, all algorithms use the same $\MCCP[R]$ algorithm, so the comparison
	is fair among them, and is focused on the difference in
	their generations of seed set sequences,
	not on the accuracy of the estimate of function $\ComputeProbability$.
Third, the seed selections actually depends only on the combined
	parameter $P' = P+\varepsilon$, and not on $P$ and $\varepsilon$
	separately.
Thus setting $\varepsilon = 0.01$ is only for
	intuitive understanding and setting it to some other value would not
	change the results as long as $P'$ remains the same.

\subsection{Experiment results}


\begin{figure}[t]
\centering
\begin{minipage}[t]{0.45\linewidth}
\includegraphics[width=\linewidth]{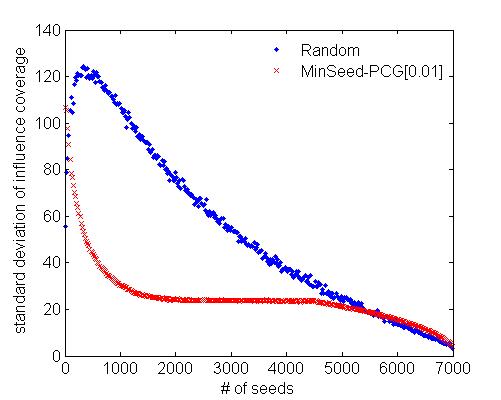} \\
\vspace{-.7cm}
{\center (a) wiki-Vote graph \\}
\end{minipage}
\begin{minipage}[t]{0.45\linewidth}
\includegraphics[width=\linewidth]{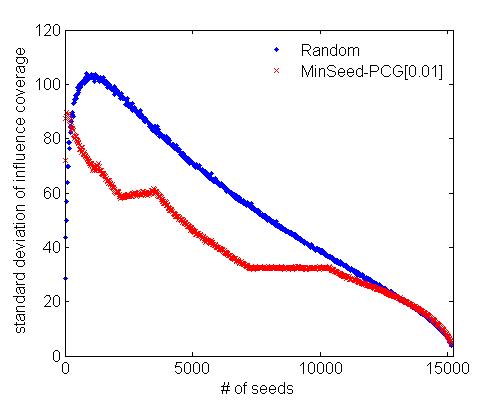}
\vspace{-.7cm}
{\center (b) NetHEPT graph \\}
\end{minipage}
\begin{minipage}[t]{0.45\linewidth}
\includegraphics[width=\linewidth]{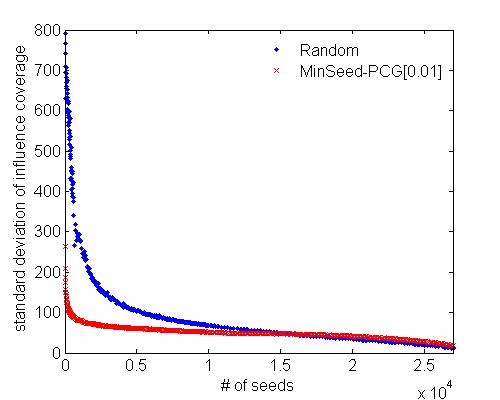} \\
\vspace{-.7cm}
{\center (c) Flixster graph with topic 1 \\}
\end{minipage}
\begin{minipage}[t]{0.45\linewidth}
\includegraphics[width=\linewidth]{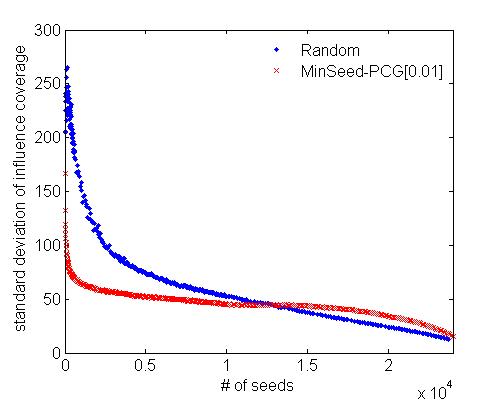} \\
\vspace{-.7cm}
{\center (d) Flixster graph with topic 2 \\}
\end{minipage}
\caption{Standard deviations of influence coverages of
	seed sets.}
\label{fig:standardd}
\end{figure}

\textbf{Concentration of influence coverages.}
Figure~\ref{fig:standardd} shows the standard deviations of
	influence coverages of randomly selected seed sets
	and greedily selected seed sets (by algorithm $\MinSeed[\varepsilon]$) on wiki-Vote, NetHEPT and Flixster.
We can see that in all graphs, standard deviations for greedily selected
	seed sets quickly drop, while for randomly selected seed sets sometimes it
	has a small increase when the seed set size is small, and then quickly drop
	too.
The maximum value is about $130$ for wiki-Vote ($|V|=7,115$), $105$ for
	NetHEPT ($|V|=15,233$), $760$ for Flixster with topic 1 ($|V|=28,317$),
	and $270$ for Flixster with topic 2 ($|V|=25,474$).
Thus by observation the standard deviation is at the order of $\sqrt{|V|}$.
As discussed after Theorem~\ref{thm:additive}, this means that
	the additive error of our algorithm would be $O(\sqrt{|V|})$, a small
	and satisfactory value.
The standard deviations for wiki-Vote are larger
	than those for NetHEPT at small seed set size even though the number of nodes
	of wiki-Vote is smaller.
We believe this is because wiki-Vote has more edges (103,689) than NetHEPT (58,891), and thus when the seed
	set size is small more edges could cause larger variances in influence
	coverage. This can also explain why in Flixster topic 1 (with 206,012 edges) has larger standard deviations than topic 2 (with 135,618 edges).

\begin{figure}[t]
\centering
\begin{minipage}[t]{0.45\linewidth}
\includegraphics[width=\linewidth]{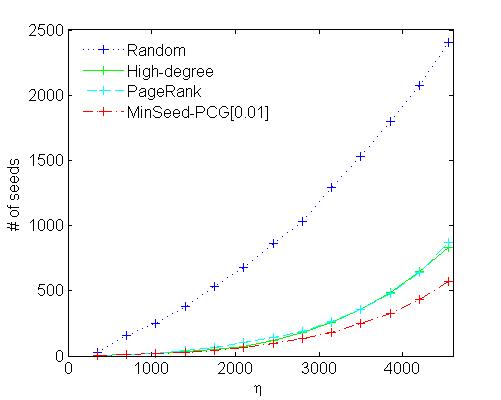} \\
\vspace{-.7cm}
{\center (a) wiki-Vote graph \\}
\end{minipage}
\begin{minipage}[t]{0.45\linewidth}
\includegraphics[width=\linewidth]{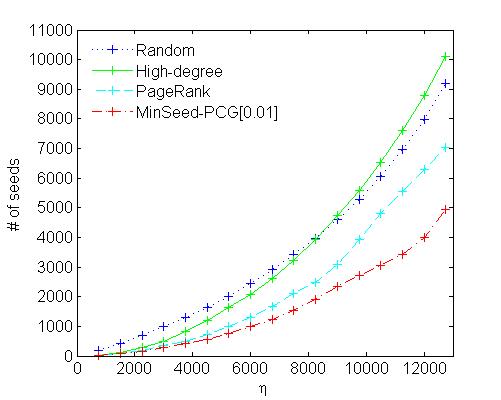} \\
\vspace{-.7cm}
{\center (b) NetHEPT graph \\}
\end{minipage}
\begin{minipage}[t]{0.45\linewidth}
\includegraphics[width=\linewidth]{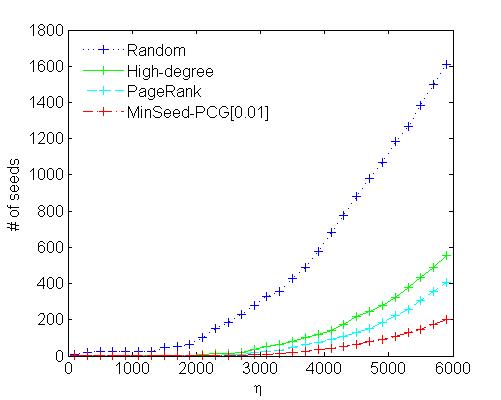} \\
\vspace{-.7cm}
{\center (c) Flixster graph with topic 1 \\}
\end{minipage}
\begin{minipage}[t]{0.45\linewidth}
\includegraphics[width=\linewidth]{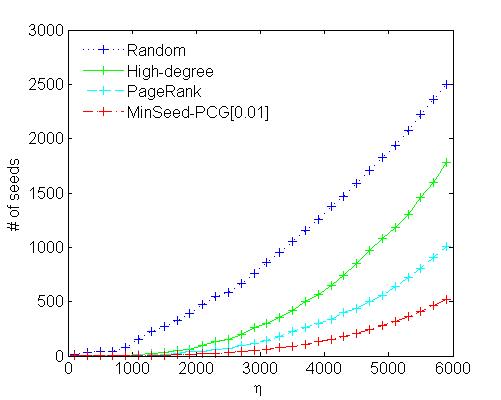} \\
\vspace{-.7cm}
{\center (d) Flixster graph with topic 2 \\}
\end{minipage}
\caption{Size of selected seed sets vs. coverage threshold $\eta$ under
	a fixed probability threshold $P=0.1$.}
\label{fig:seedsvseta}
\end{figure}

\begin{figure}[t]
\centering
\begin{minipage}[t]{0.45\linewidth}
\includegraphics[width=\linewidth]{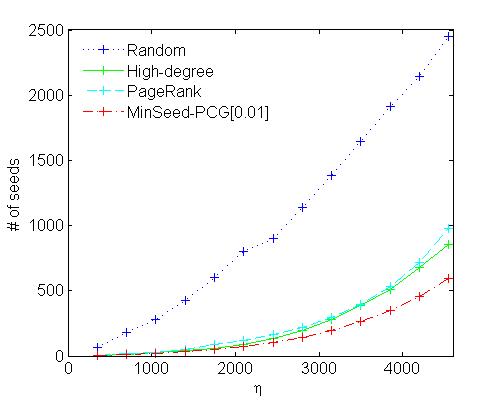} \\
\vspace{-.7cm}
{\center (a) wiki-Vote graph \\}
\end{minipage}
\begin{minipage}[t]{0.45\linewidth}
\includegraphics[width=\linewidth]{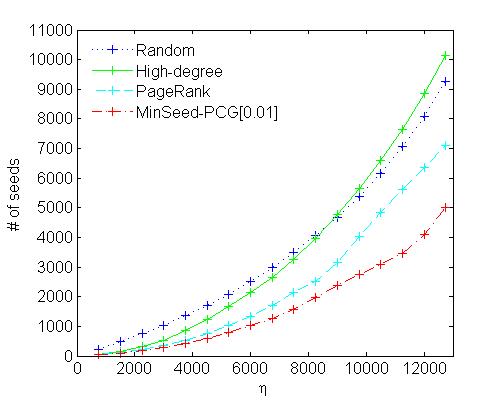} \\
\vspace{-.7cm}
{\center (b) NetHEPT graph \\}
\end{minipage}
\begin{minipage}[t]{0.45\linewidth}
\includegraphics[width=\linewidth]{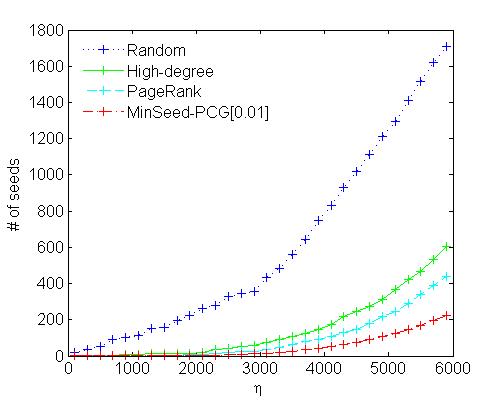} \\
\vspace{-.7cm}
{\center (c) Flixster graph with topic 1 \\}
\end{minipage}
\begin{minipage}[t]{0.45\linewidth}
\includegraphics[width=\linewidth]{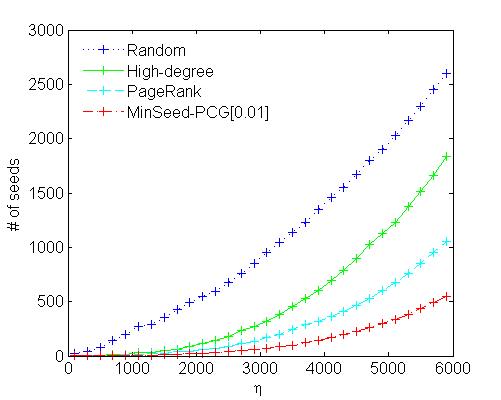} \\
\vspace{-.7cm}
{\center (d) Flixster graph with topic 2 \\}
\end{minipage}
\caption{Size of selected seed sets vs. coverage threshold $\eta$ under
	a fixed probability threshold $P=0.5$.}
\label{fig:seedsvseta2}
\end{figure}


\begin{figure}[t]
\centering
\begin{minipage}[t]{0.45\linewidth}
\includegraphics[width=\linewidth]{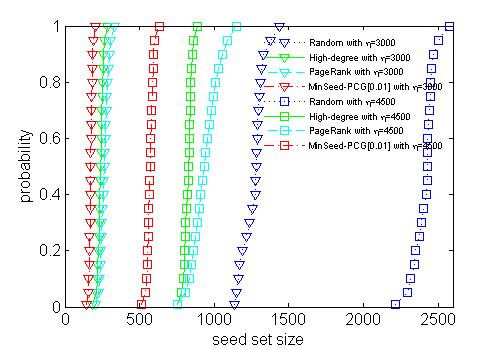}\\
\vspace{-.7cm}
{\center (a) wiki-Vote graph, $\eta=3000, 4500$ \\}
\end{minipage}
\begin{minipage}[t]{0.45\linewidth}
\includegraphics[width=\linewidth]{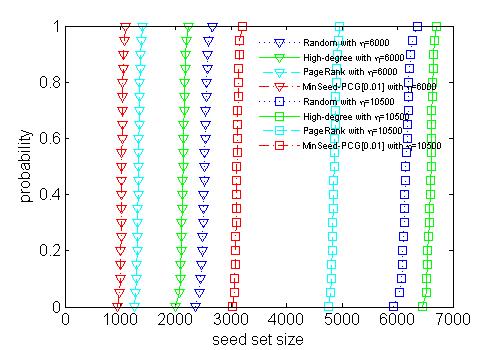}\\
\vspace{-.7cm}
{\center (b) NetHEPT graph, $\eta=6000, 10500$ \\}
\end{minipage}
\begin{minipage}[t]{0.45\linewidth}
\includegraphics[width=\linewidth]{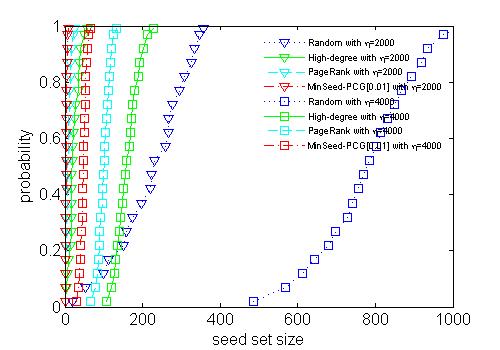}\\
\vspace{-.7cm}
{\center (c) Flixster graph with topic 1, $\eta=2000, 4000$ \\}
\end{minipage}
\begin{minipage}[t]{0.45\linewidth}
\includegraphics[width=\linewidth]{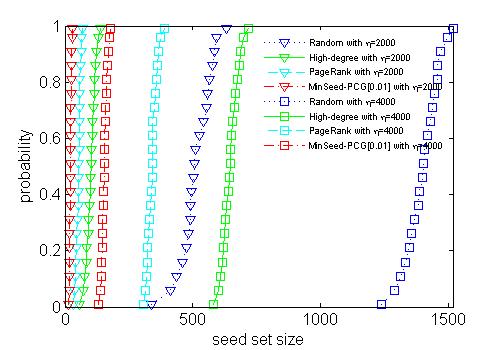}\\
\vspace{-.7cm}
{\center (c) Flixster graph with topic 2, $\eta=2000, 4000$ \\}
\end{minipage}
\caption{Size of selected seed sets vs. probability threshold $P$ under a fixed coverage threshold $\eta$.}
\label{fig:probvsseeds}
\end{figure}


\textbf{Performance of $\MinSeed[\varepsilon]$ compared with baselines.}
We conduct two sets of tests for this purpose.
First, we fix the probability threshold $P$ to $0.1$ and $0.5$, and vary
	the coverage threshold $\eta$ to compare the size of seed sets selected
	by various algorithms.
Figure~\ref{fig:seedsvseta} and Figure~\ref{fig:seedsvseta2} show the test results on three datasets.
All test results consistently show that our algorithm performances the best,
	and sometimes with a significant improvement over the {\sf Random}
	and {\sf High-degree} heuristics.
In particular, for wiki-Vote and $P=0.1$ (Figure~\ref{fig:seedsvseta}(a)), on average
	our algorithm $\MinSeed[\varepsilon]$
	selects seed sets with size $88.2\%$ less than those selected by
	{\sf Random}, $20.2\%$ less than {\sf High-degree}, and $30.9\%$ less than {\sf PageRank}.
For NetHEPT and $P=0.1$ (Figure~\ref{fig:seedsvseta}(b)),
	on average our algorithm selects seed sets
	with size $56.7\%$ less than {\sf Random}, $46.0\%$
	less than {\sf High-degree}, and $24.4\%$ less than {\sf PageRank}.
The {\sf High-degree} heuristic performs close to $\MinSeed[\varepsilon]$
	in wiki-Vote, but performs badly in NetHEPT, even worse than
	{\sf Random} when $\eta$ is large.
This shows that {\sf High-degree} is not a good and stable heuristic for
	this task.
For Flixster with topic 1 and $P=0.1$ (Figure~\ref{fig:probvsseeds}(c)), on average $\MinSeed[\varepsilon]$ selects seed sets with size $94.4\%$ less than {\sf Random}, $54.0\%$ less than {\sf High-degree}, and $29.2\%$ less than {\sf PageRank}. For Fixster with topic 2 and $P=0.1$ (Figure~\ref{fig:probvsseeds}(d)), on average $\MinSeed[\varepsilon]$ selects seed sets with size $91.1\%$ less than {\sf Random}, $73.0\%$ less than {\sf High-degree}, and $24.4\%$ less than {\sf PageRank}.
Figures~\ref{fig:seedsvseta2} show the results for
	$P=0.5$.
The curves are almost the same as the corresponding ones for $P=0.1$.
This can be explained by the sharp phase transition to be
	observed in the next set of tests, which is due to concentration of influence
	coverage, such that typically only a few tens of more seeds would
	satisfy probability threshold $P$ from $0.1$ to $0.5$.
	
Our second set of tests is to fix a coverage threshold $\eta$, and
	observe the change
	of coverage	probability $\Pr(\Inf(S)\ge \eta)$ as the seed set $S$
	grows as computed by various algorithms.
Figure~\ref{fig:probvsseeds} shows the test results for the three datasets.
Wiki-Vote, NetHEPT and Flixster with topic 2 (Figure~\ref{fig:probvsseeds}(a), (b), (d)) have sharp phase transition: there is a short range of seed set size where the probability increases very fast from 0.01 to very close to 1 (only several nodes are needed to reach a 0.1 increment in probability). While Filxster with topic 1 (Figure~\ref{fig:probvsseeds}(c)) has a relatively smooth phase transition. This phase transition phenomenon is clearly due to the concentration of influence coverages of seed sets, as already verified in Figure~\ref{fig:standardd}.

	

In all our tests, $\MinSeed[\varepsilon]$ performances the best:
	its phase transition comes first before the other algorithms, which means
	it uses less number of seeds to achieve the same probability threshold
	$P$.
{\sf Random} performs much worse than $\MinSeed[\varepsilon]$, while
	{\sf PageRank} and {\sf High-degree} perform close to $\MinSeed[\varepsilon]$ when
	$\eta$ is small, but noticeably worse than $\MinSeed[\varepsilon]$
	when $\eta$ gets larger.
For wiki-Vote graph, on average $\MinSeed[\varepsilon]$ selects a seed set with size $34.1\%$ less than {\sf PageRank}, $27.7\%$ less than {\sf High-degree}, and $86.4\%$ less than {\sf Random} when $\eta=3,000$. When choose $\eta=4,500$, $\MinSeed[\varepsilon]$ selects a seed set with size on average $38.8\%$ less than {\sf PageRank}, $30.8\%$ less than {\sf High-degree}, and $76.3\%$ less than {\sf Random}. For NetHEPT graph, when $\eta=6,000$, on average $\MinSeed[\varepsilon]$ selects a seed set with size $22.8\%$ less than {\sf PageRank}, $51.8\%$ less than {\sf High-degree}, and $59.2\%$ less than {\sf Random}. When $\eta=10,500$, on average $\MinSeed[\varepsilon]$ selects a seed set with size $36.1\%$ less than {\sf PageRank}, $52.9\%$ less than {\sf High-degree}, and $49.6\%$ less than {\sf Random}.
For Flixster graph with topic 1, when $\eta=2,000$, on average the output number of seeds by $\MinSeed[\varepsilon]$ is $44.1\%$ less than {\sf PageRank}, $78.9\%$ less than {\sf High-degree}, and $98.3\%$ less than {\sf Random}. When $\eta=4,000$, the corresponding results are $53.2\%$,  $70.7\%$ and $93.9\%$. For topic 2, when $\eta=2,000$, on average the output number of seeds by $\MinSeed[\varepsilon]$ is $59.0\%$ less than {\sf PageRank}, $78.6\%$ less than {\sf High-degree}, and $95.8\%$ less than {\sf Random}. When $\eta=4,000$, the corresponding results are $54.9\%$, $76.2\%$ and $89.0\%$.

For all these graphs, we do not test the case
	when $\eta$ is very close to the number of nodes.
Since in this case a large seed set close to the full node set is needed,
	and greedy-based seed selection loses its advantage comparing to
	simple random or high-degree heuristics when a large number of seeds
	are needed.
Moreover, we believe that requiring $\eta$ to be close to the full network size
	is not a realistic scenario in practice.

As a summary, our experimental results validate that influence coverages of
	seed sets are concentrated well in real-world networks, and thus
	support the claim that our algorithm provides good approximation guarantee.
Moreover, our algorithm performs much better than simple baseline algorithms,
	achieving significant savings on seed set size.

\section{Future Work}
\label{sec:conclusion}


This study may inspire a number of future directions.
One is to study the concentration property of other classes of graphs, especially
	graphs close to real-world networks such as power-law graphs, to see if we
	can analytically prove that a large class of graphs have good concentration
	property on influence coverage distributions.
Another direction is to speed up the estimation of $\Pr(\Inf(S)\ge \eta)$, which
	is done by Monte Carlo simulation in this work and is slow.
One may also study influence maximization problem where
	reaching the tipping point is the first step, which is followed by further diffusion
	steps.
Our algorithm and results may be an integral component of such influence maximization
	tasks.



%
\bibliographystyle{abbrv}
\bibliography{draft}  
%
%
\clearpage
\appendix
\section*{Appendix: Bipartite Graphs for Full Coverage}
In this appendix, we will discuss about SM-PCG problem with $\eta =|U|$ on a one-way bipartite graph $G=(V_1, V_2, E)$ where all edges are from $V_1$ to $V_2$. For the sake of convenience, we assume that $U=V_2$, which is easy to be removed. Let $V = V_1\cup V_2$, and $|V| = n$, $|U| = m$.
We propose an $O(\log m)$-approximation algorithm for both edge probabilities and probabilistic threshold $P$ being constant, which asymptotically matches the inapproximation result from Theorem~\ref{thm:hard}.

This algorithm is described in Algorithm~\ref{two-stage}, which contains two stages.
Firstly, we greedily select a seed set, say $S_1$, such that all nodes in $U$ can be reached from nodes in $S_1$. In Algorithm~\ref{two-stage}, we define $R(S)$ to be a subset of $U$ such that for each $u\in R(S)$, there exists an edge $(s,u)\in E$ for some $s\in S$. Intuitively, it is to find a set cover greedily where the universe is $U$ and the collection of subsets is $V$. Secondly, based on the selected $S_1$, we define a set function $f_{S_1}(X): 2^{V\setminus S_1}\rightarrow (0,1]$, which computes the probability that $X\cup S_1$ activates all nodes in $U$.
Unfortunately, it is easy to verify that $f_{S_1}$ is nonsubmodular. We define another function $g_{S_1}(X)=\log f_{S_1}(X)-\log f_{S_1}(\emptyset)$. Obviously, $g_{S_1}(X)$ is non-negative and monotone. Actually, we will show that $g_{S_1}(X)$ is also submodular later. Based on these nice properties of $g_{S_1}(X)$,
we use greedy algorithm to find another seed set $S_2$ such that $g_{S_1}(S_2)\geq \log P-\log f_{S_1}(\emptyset)$, that is, $S_1\cup S_2$ can activate all nodes in $U$ with a probability at least $P$.

\begin{algorithm}[thb]
\renewcommand{\algorithmicrequire}{\textbf{Input:}}
\renewcommand\algorithmicensure {\textbf{Output:}}
    \caption{Two-stage Algorithm}
    \begin{algorithmic}[1]
    \REQUIRE ~~ $G=(V_1,V_2,E),P$\\
    \ENSURE ~~ $S=\arg\min_{S':\Pr(\Inf(S')=m)\geq P}\{|S'|\}$
\STATE set $S_1=\emptyset$, $S_2=\emptyset$
\STATE /*first-stage*/
\WHILE {$\exists u\in U$, there is no $s\in S$ such that $(s,u)\in E$}
    \STATE \ select $v=\arg\max_{w\in V}\{R(S_1\cup \{w\})-R(S_1)\}$
    \STATE \ $S_1=S_1\cup \{v\}$
\ENDWHILE
\STATE /*second-stage*/
\WHILE {$g_{S_1}(S_2)<\log P-\log f(S_1)$}
    \STATE \ select $v=\arg\max_{w\in V\setminus S_1}\{g_{S_1}(S_2\cup \{w\})-g_{S_1}(S_2)\}$
    \STATE \ $S_2=S_2\cup \{v\}$
\ENDWHILE
\RETURN $S_1\cup S_2$
\end{algorithmic}
\label{two-stage}
\end{algorithm}


Let $S_1^*$ be an optimal set in the first stage, and $S_2^*$ be an optimal set in the second stage based on $S_1$. Let $S^*$ denote an optimal seed set with the minimum size such that the probability of activating $U$ is at least $P$. For set cover problem, greedy algorithm provides an $\ln m-\ln \ln m + \Theta(1)$ approximation~\cite{slavik1996tight}. Thus, it is easy to see that $|S_1|\leq \ln m |S_1^*|$. We will show that $g_{S_1}(X)$ is submodular, which indicates greedy algorithm in the second stage also provides a good approximation guarantee.

\begin{lemma}
$g_{S_1}(X)$ is submodular.
\end{lemma}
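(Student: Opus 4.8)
The plan is to show that $g_{S_1}(X) = \log f_{S_1}(X) - \log f_{S_1}(\emptyset)$ is submodular by analyzing the structure of $f_{S_1}$ in the one-way bipartite setting. The key observation is that on a one-way bipartite graph, the activation events of the nodes in $U = V_2$ are mutually independent: once we fix a seed set $X \cup S_1$, a node $u \in U$ is activated with probability $p(X\cup S_1, u) = 1 - \prod_{w \in X \cup S_1}(1 - p_{w,u})$, independently of the other nodes in $U$. Hence $f_{S_1}(X) = \prod_{u \in U} p(X \cup S_1, u)$, and therefore $\log f_{S_1}(X) = \sum_{u \in U} \log p(X \cup S_1, u)$. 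Since the constant $\log f_{S_1}(\emptyset) = \sum_{u\in U}\log p(S_1,u)$ does not affect submodularity, it suffices to prove that for each fixed $u \in U$, the function $X \mapsto \log p(X \cup S_1, u)$ is submodular on $2^{V \setminus S_1}$; submodularity is preserved under summation.

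So the crux reduces to a single-node calculation. Fix $u$, write $q_w = 1 - p_{w,u} \in [0,1]$ for each $w$, and let $Q(S_1) = \prod_{s\in S_1} q_s$ be the contribution of the already-chosen first-stage seeds. Then for $X \subseteq V\setminus S_1$ we have $p(X\cup S_1,u) = 1 - Q(S_1)\prod_{w\in X} q_w$. Define $h(X) = \log\bigl(1 - Q(S_1)\prod_{w\in X}q_w\bigr)$. I would verify the discrete submodularity condition directly: for $X \subseteq Y \subseteq V\setminus S_1$ and $v \notin Y$, show $h(X\cup\{v\}) - h(X) \ge h(Y\cup\{v\}) - h(Y)$. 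Writing $a = Q(S_1)\prod_{w\in X}q_w$ and $b = Q(S_1)\prod_{w\in Y}q_w$, we have $b = a \cdot \prod_{w \in Y\setminus X} q_w \le a$ since each $q_w \le 1$; and the two marginal gains are $\log\frac{1 - a q_v}{1-a}$ and $\log\frac{1 - b q_v}{1-b}$. So the claim is equivalent to the statement that the function $t \mapsto \log\frac{1 - t q_v}{1 - t} = \log(1 - tq_v) - \log(1-t)$ is nonincreasing in $t$ on $[0,1)$. Differentiating, this derivative equals $\frac{-q_v}{1-tq_v} + \frac{1}{1-t}$, which is nonnegative iff $(1-t) \le (1-tq_v)q_v^{-1}\cdot q_v$... more cleanly: $\frac{1}{1-t} - \frac{q_v}{1-tq_v} = \frac{(1-tq_v) - q_v(1-t)}{(1-t)(1-tq_v)} = \frac{1 - q_v}{(1-t)(1-tq_v)} \ge 0$. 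So the function is actually \emph{nondecreasing} in $t$ — wait, I need to recheck the direction: since $b \le a$ corresponds to $t$ decreasing from $a$ to $b$, and the relevant function is nondecreasing in $t$, the marginal gain at $X$ (larger $t = a$) is at least the marginal gain at $Y$ (smaller $t = b$), which is exactly the submodularity inequality. I would carry out this elementary one-variable monotonicity check carefully to pin down the sign.

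The main obstacle, such as it is, is bookkeeping rather than anything deep: one must be careful that $p(X\cup S_1, u) > 0$ throughout so that the logarithm (and hence $g_{S_1}$) is well-defined — this is guaranteed because the first stage ensures every $u\in U$ has at least one in-neighbor in $S_1$ with positive edge probability, so $Q(S_1)\prod_{w\in X}q_w < 1$ and $p(\cdot,u)$ stays bounded away from $0$. One must also handle the edge case $q_v = 1$ (i.e. $p_{v,u} = 0$), where the marginal gain is identically $0$ on both sides and submodularity holds trivially. Assembling the pieces: $g_{S_1}(X) + \log f_{S_1}(\emptyset) = \sum_{u\in U} h_u(X)$ where each $h_u$ is submodular by the above, a finite sum of submodular functions is submodular, and adding the constant $-\log f_{S_1}(\emptyset)$ preserves submodularity. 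Hence $g_{S_1}$ is submodular, completing the proof. (The LT-model case is identical with $p(X\cup S_1,u) = \sum_{w\in X\cup S_1} p_{w,u}$, where $\log$ of an affine function is concave and the same marginal-gain argument applies, but for the IC model the above suffices.)
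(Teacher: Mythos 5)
Your proof is correct and follows essentially the same route as the paper: both exploit the mutual independence of target-node activations on the one-way bipartite graph to write $\log f_{S_1}$ as a sum of per-node terms $\log p(X\cup S_1,u)$, and then establish submodularity of each term by an elementary monotonicity check of the per-node log marginal gain as a function of the current (non-)activation probability — you do it by differentiating in the variable $t=\prod_w q_w$, while the paper reads the same monotonicity off the closed-form expression $\log\bigl(1+\tfrac{P(\{u\},v)}{P(T\cup S_1,v)}-P(\{u\},v)\bigr)$ in the variable $P(T\cup S_1,v)$. Your added bookkeeping (positivity of $p(X\cup S_1,u)$ after the first stage, and the $q_v=1$ edge case) is a minor extra rigor, not a different argument.
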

\begin{proof}
Suppose for any two sets $T_1\subseteq T_2 \subseteq V\setminus S_1$, and any $u\in V\setminus (S_1\cup T_2)$.
Let $N(u)$ be the set of all out-neighbors of $u$, and let $P(T,v)$ be the probability that $T$ influence $v$. Then, we have the following result,
\begin{eqnarray*}
&&g_{S_1}(T_1\cup \{u\})-g_{S_1}(T_1)\\
&=&\sum_{v\in N(u)} \left( \log P(T_1\cup S_1\cup \{u\}, v) - P(T_1\cup S_1, v) \right) \\
&=&\sum_{v\in N(u)} \left( \log (P(T_1\cup S_1, v)+P(\{u\},v) -P(T_1\cup S_1, v)\times P(\{u\},v))-\log P(T_1\cup S_1, v) \right)\\
&=&\sum_{v\in N(u)} \log \left(1+\frac{P(\{u\},v)}{P(T_1\cup S_1, v)}-P(\{u\},v)\right)
\end{eqnarray*}
Similarly, we have
$$g_{S_1}(T_2\cup \{u\})-g_{S_1}(T_2)=\sum_{v\in N(u)} \log \left(1+\frac{P(\{u\},v)}{P(T_2\cup S_1, v)}-P(\{u\},v)\right)$$
Since $T_1\subseteq T_2$, $P(T_1\cup S_1, v)\leq P(T_2\cup S_1, v)$. It means that $g_{S_1}(T_1\cup \{u\})-g(T_1)\geq g(T_2\cup \{u\})-g(T_2)$, thus, $g_{S_1}(X)$ is submodular.
\qedsymbol \end{proof}

\begin{theorem}
Our two-stage greedy algorithm provides a seed set $S$ with the size
$\left(\ln m+\left\lceil \ln \left(m\left(\frac{m\log p_{min}}{\log P}-1\right)\right) \right\rceil\right)\cdot |S^*|+1$, where $p_{min}$ is the smallest edge probability on $G$.
\end{theorem}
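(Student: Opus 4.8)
The plan is to bound the sizes of $S_1$ and $S_2$ separately and then add them, using the structural results already established: the set-cover greedy guarantee for the first stage, the submodularity of $g_{S_1}$ for the second stage, and Theorem~\ref{thm:greedy} applied to $g_{S_1}$. First I would handle the first stage. Running the greedy set-cover procedure on universe $U$ with the subsets $\{N(w)\cap U : w\in V\}$ yields, by the tight analysis of \cite{slavik1996tight}, a cover $S_1$ with $|S_1|\le \ln m\,|S_1^*|$ where $S_1^*$ is a minimum cover. Since any feasible solution $S^*$ to SM-PCG with probability threshold $P>0$ must in particular reach every node of $U$ with positive probability, hence must dominate every node of $U$, we get $|S_1^*|\le |S^*|$, and therefore $|S_1|\le \ln m\,|S^*|$.

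Next I would set up the second stage as an instance of submodular cover so that Theorem~\ref{thm:greedy} applies. We have already shown $g_{S_1}$ is nonnegative, monotone and submodular on ground set $V\setminus S_1$; take $f=g_{S_1}$ there. The target threshold is $\eta_2 = \log P - \log f_{S_1}(\emptyset)$, and the "full value" is $g_{S_1}(V\setminus S_1) = \log f_{S_1}(V\setminus S_1) - \log f_{S_1}(\emptyset) = -\log f_{S_1}(\emptyset)$ because $S_1\cup(V\setminus S_1)=V$ activates all of $U$ deterministically, so $f_{S_1}(V\setminus S_1)=1$. Thus in the notation of Theorem~\ref{thm:greedy} (with $\gamma=0$, exact evaluation) we have $f(V')=-\log f_{S_1}(\emptyset)$ and $f(V')-\eta_2 = -\log P$. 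Now I must identify the optimal second-stage set. Let $S_2^*$ be a minimum-size $X\subseteq V\setminus S_1$ with $g_{S_1}(X)\ge \eta_2$; I claim $|S_2^*|\le |S^*|$, because $S^*\setminus S_1$ is such a set: $S_1\supseteq$ a cover and $S^*\cup S_1 \supseteq S^*$ so $f_{S_1}(S^*\setminus S_1) = \Pr((S^*\setminus S_1)\cup S_1 \text{ activates }U) \ge \Pr(S^* \text{ activates }U)\ge P$ by monotonicity of activation probability in the seed set, giving $g_{S_1}(S^*\setminus S_1)\ge \log P - \log f_{S_1}(\emptyset) = \eta_2$. Plugging $|V'|\le m$ (the relevant ground-set size bound, since only nodes with out-neighbors in $U$ matter — or simply $|V\setminus S_1|\le n$, adjusting constants) into Theorem~\ref{thm:greedy} with $\varepsilon_0=0$ yields
\[
|S_2| \le \left\lceil \ln\!\left(\frac{\eta_2\, |V'|}{f(V')-\eta_2}\right)\right\rceil |S_2^*| + 1
= \left\lceil \ln\!\left(\frac{m(-\log f_{S_1}(\emptyset) + \log P)}{-\log P}\right)\right\rceil |S^*| + 1,
\]
and then bounding $-\log f_{S_1}(\emptyset) = -\sum_{v\in U}\log p(S_1,v) \le -m\log p_{\min}$ (each node of $U$ is covered by $S_1$, so $p(S_1,v)\ge p_{\min}$) converts this into $\lceil \ln(m(m\log p_{\min}/\log P - 1))\rceil |S^*|+1$.

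Adding the two bounds gives $|S|=|S_1|+|S_2|\le (\ln m + \lceil \ln(m(m\log p_{\min}/\log P - 1))\rceil)|S^*| + 1$, which is the claimed inequality. The main obstacle I anticipate is getting the second-stage ground-set size and the stopping-criterion bookkeeping exactly right so that Theorem~\ref{thm:greedy}'s hypotheses ($0<\eta_2<f(V')$, the $\gamma=0$ specialization, uniform costs) are all literally satisfied; in particular one must check $f_{S_1}(\emptyset)<P$ (otherwise $S_2=\emptyset$ and the bound is trivial) so that $\eta_2>0$, and confirm the "$|V'|\le m$" substitution is legitimate — nodes in $V_1$ with no out-edge into $U$ can be discarded, and the relevant count is the number of candidate seeds, which one can bound by $m$ after this pruning, or else absorb a harmless $n$-vs-$m$ discrepancy into the stated constant. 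The submodularity and monotonicity of $g_{S_1}$, already proved, do the real work; the rest is careful substitution.
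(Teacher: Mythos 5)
Your first stage and the reduction of the second stage to a submodular-cover instance ($g_{S_1}$ monotone, nonnegative, submodular; threshold $\eta_2=\log P-\log f_{S_1}(\emptyset)$; $g_{S_1}(S^*\setminus S_1)\ge\eta_2$ so $|S_2^*|\le|S^*|$) all match the paper. The gap is in how you extract the multiplicative factor. You invoke Theorem~\ref{thm:greedy} with $\gamma=0$, whose factor is $\bigl\lceil \ln\bigl(\eta_2\,|V'|/(f(V')-\eta_2)\bigr)\bigr\rceil$ with $|V'|$ the \emph{ground-set size} of the second stage, and then claim $|V'|\le m$ "after pruning nodes with no out-edge into $U$." That pruning does not give $|V'|\le m$: there can be up to $|V_1|\approx n$ nodes of $V_1$, each with an edge into $U$, all of which remain legitimate candidate seeds (and restricting candidates to $V_2$ would change the algorithm and could hurt $|S_2^*|$). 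So your route proves a bound with $|V\setminus S_1|$ (i.e.\ essentially $n$) inside the logarithm, not $m$; the "harmless $n$-vs-$m$ discrepancy" is exactly what the stated theorem (and the closing claim $|S|=O(\log m)|S^*|$ for constant $p_{\min},P$) does not allow.

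The paper gets $m$ inside the logarithm by a different mechanism: it applies the \emph{bicriteria} Theorem~\ref{bicriteria}, whose factor $\lceil\ln(\eta_2/\varepsilon)\rceil$ involves only the chosen shortfall $\varepsilon$ and no ground-set size, with the specific choice $\varepsilon=-\log\sqrt[m]{P}=-\tfrac{\log P}{m}$. The shortfall is then repaired by adding a single extra seed: the node $v\in U$ with minimum activation probability under $S_1\cup S_2'$. When $f_{S_1}(S_2')<P$, a geometric-mean argument over the $m$ independent nodes of $U$ gives $P(S_1\cup S_2',v)\le \sqrt[m]{P}$, so seeding $v$ (raising its probability to $1$) gains at least $-\log\sqrt[m]{P}=\varepsilon$ in $g_{S_1}$ and closes the gap, contributing the $+1$. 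This averaging over the $m$ targets, rather than over the candidate ground set, is precisely where $m$ replaces $n$. To salvage your proof you would need either a correct argument bounding the second-stage ground set by $m$ (which is false in general) or this kind of shortfall-plus-one-node patch; as written, the final substitution step fails and the claimed bound is not established.
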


\begin{proof}
Since $g_{S_1}(X)$ is monotone and submodular, by theorem \ref{bicriteria}, we can find a seed set $S_2'$ such that $g_{S_1}(S_2')\geq \log P-\log f_{S_1}(\emptyset)-\varepsilon$ and $|S_2'|\leq |S_2^*|\cdot\left\lceil \ln \left(\frac{\log P-\log f_{S_1}(\emptyset)}{\varepsilon}\right) \right\rceil$, where we set $\varepsilon=-\log \sqrt[m]{P}$. If $g_{S_1}(S_2')\ge \log P-\log f_{S_1}(\emptyset)$, we set $S_2=S_2'$; otherwise, we find the node $v=\arg\min_{v_i\in V}\{P(S_1\cup S_2',v_i)\}$, and add $v$ into seed set, that is, $S_2=S_2'\cup \{v\}$. Now, we have
\begin{eqnarray}
g_{S_1}(S_2)&=&\log f_{S_1}(S_2)-\log f_{S_1}(\emptyset)\\
&=&\log f_{S_1}(S_2')-\log P(S_2',v)-\log f_{S_1}(\emptyset) \label{eq1}\\
&\geq& \log P + \log \sqrt[m]{P}-\log P(S_2',v)-\log f_{S_1}(\emptyset) \label{eq2}\\
&=&\log P + \log \frac{\sqrt[m]{P}}{P(S_2',v)}-\log f_{S_1}(\emptyset)\\
&\geq& \log P-\log f_{S_1}(\emptyset). \label{eq3}
\end{eqnarray}
Equality~\eqref{eq1} comes from the independence of the activation of nodes in $U$.
Inequality~\eqref{eq2} comes from the fact $g_{S_1}(S_2')\geq \log P-\log f_{S_1}(\emptyset)+\log \sqrt[m]{P}$. And~\eqref{eq3} holds since $P(S_2',v)$ is the minimum activation probability for nodes in $U$, which is smaller than the average activation probability $\sqrt[m]{P}$.
On the other hand, we have
\begin{eqnarray}
|S_2|&\leq& |S_2^*|\cdot\left\lceil \ln \left(\frac{\log P-\log f_{S_1}(\emptyset)}{-\log \sqrt[m]{P}}\right) \right\rceil+1\\
&\leq& |S_2^*|\cdot\left\lceil \ln \left(\frac{m\log P- m^2 \log p_{min}}{-\log P}\right) \right\rceil+1\label{eq4}\\
&=&|S_2^*|\cdot\left\lceil \ln \left( m\left(\frac{m\log p_{min}}{\log P}-1\right)\right) \right\rceil+1.
\end{eqnarray}
Inequality~\eqref{eq4} comes from that $f_{S_1}(\emptyset)=\prod _{u\in U}P(S_1,u)\geq p_{min}^m$.

Since we require that all nodes in $U$ to be activated, thus  $S^*$ is a feasible solution in the first stage. It means that $|S_1|\le \ln m|S_1^*|\leq \ln m|S^*|$. On the other hand, since $g_{S_1}(S^*)=\log f_{S_1}(S^*)-\log f_{S_1}(\emptyset)\geq \log P-\log f_{S_1}(\emptyset)$, thus $|S_2^*|\leq |S^*|$. So, we have shown that
$$|S|=|S_1|+|S_2|\leq \left(\ln m  +\left\lceil \ln \left(m\left(\frac{m\log p_{min}}{\log P}-1\right)\right) \right\rceil \right)\cdot |S^*|+1.$$
\qedsymbol \end{proof}
When both $p_{min}$ and $P$ are constant, we have $|S| = O(\log m)|S^*|$.


\end{document}